\documentclass[11pt,a4paper,reqno,oneside]{amsart}
\usepackage[english]{babel}
\usepackage{stix2}
\usepackage{microtype}
\usepackage{ragged2e}
\usepackage[autostyle=true]{csquotes} 

\usepackage{zref-clever}
\numberwithin{equation}{section}
\zcsetup{
    cap=true, 
    nameinlink=false, 
}

\zcRefTypeSetup{equation}{
    abbrev=true, 
}

\newcommand{\cref}[1]{\zcref{#1}}
\newcommand{\Cref}[1]{\zcref[S]{#1}}

\usepackage{graphicx} 
\graphicspath{{./figures/}}

\usepackage[listformat=empty]{caption} 
\usepackage{subcaption}
\usepackage[shortlabels]{enumitem}

\usepackage{keytheorems} 
\newkeytheorem{theorem}[
    parent=section
]
\newkeytheorem{proposition,lemma,corollary}[
    style=plain,
    sibling=theorem 
]
\newkeytheorem{definition}[
    style=definition,
    sibling=theorem
]
\newkeytheorem{remark,example}[
    style=remark,
    sibling=theorem
]
\newkeytheorem{hypothesis}[
    style=plain,
    sibling=theorem,
    name=Hypothesis,
    refname=hypothesis,
    Refname=Hypothesis,
]

\usepackage[
backend=bibtex,
style=numeric-comp,
giveninits=true,
maxbibnames=299,
natbib=true,
doi=false,
isbn=false,
url=false,
sorting=nyt,
maxbibnames=8,
eprint=false
]
{biblatex}
  \renewbibmacro{in:}{}
\DeclareFieldFormat[misc]{title}{``#1''\isdot}
\DeclareFieldFormat[online]{title}{``#1''\isdot}
\DeclareFieldFormat{journaltitle}{#1\isdot}
\DeclareFieldFormat[article,periodical]{volume}{\mkbibbold{#1}}
\DeclareFieldFormat{pages}{#1}
 \AtEveryBibitem{%
 	\clearfield{month}}
    \AtEveryBibitem{%
 	\clearfield{day}}
      \AtEveryBibitem{%
 	 \clearfield{number}}
\AtEveryBibitem{%
 	 \clearfield{pagetotal}}
\renewbibmacro*{addendum+pubstate}{} 

\AtEveryBibitem{%
	\clearfield{number}}    \AtEveryBibitem{\clearfield{pagetotal}}    \AtEveryBibitem{\clearfield{volumes}}
\AtEveryBibitem{\clearfield{language}}

    \AtEveryBibitem{%
        \iffieldundef{doi}
        {
            \iffieldundef{issn}
            {}
            {\clearfield{isbn}}
        }
        {
            \clearfield{issn}%
            \clearfield{isbn}
        }%
    }

\newbibmacro{string+doiurleprint}[1]{%
  \iffieldundef{doi}{%
    \iffieldundef{url}{%
      \iffieldundef{eprint}{#1}%
      {\href{https://arxiv.org/abs/\thefield{eprint}}{#1}}}%
    {\href{\thefield{url}}{#1}}}%
  {\href{https://doi.org/\thefield{doi}}{#1}}}
\DeclareFieldFormat{title}{\usebibmacro{string+doiurleprint}{\mkbibemph{#1}}}
\DeclareFieldFormat[article,periodical,inbook,incollection,inproceedings,thesis,report]{title}%
  {\usebibmacro{string+doiurleprint}{``#1''\isdot}}
\DeclareFieldFormat[misc,online]{title}{\usebibmacro{string+doiurleprint}{``#1''\isdot}}

\usepackage{appendix}

\usepackage{mathtools}
\usepackage{amsfonts}
\usepackage{amssymb}
\usepackage{dsfont}
\usepackage{nicefrac}
\usepackage{braket}
\usepackage{diffcoeff}
\usepackage{siunitx}
\usepackage{cancel}

\usepackage{xcolor}
\definecolor{royalpurple}{rgb}{0.47, 0.32, 0.66}
\definecolor{dblue}  {RGB}{20,66,129}
\definecolor{ddblue} {RGB}{11,36,69}
\definecolor{jblue}  {RGB}{20,50,100}
\definecolor{cblue}{rgb}{0.16, 0.32, 0.75}
\definecolor{cred}{rgb}{0.7, 0.11, 0.11}

\usepackage{hyperref} 
\hypersetup{%
	colorlinks = true,
	linkcolor=cred,
	citecolor=cblue,
	urlcolor=black,
}

\newcommand{\id}{\mathds{1}}

\DeclareMathOperator{\Real}{Re}
\DeclareMathOperator{\Imag}{Im}
\DeclareMathOperator{\mspan}{Span}
\DeclareMathOperator{\Ker}{Ker}
\DeclareMathOperator{\Ran}{Ran}

\DeclarePairedDelimiterX{\norm}[1]\lVert\rVert{
	\ifblank{#1}{\:\cdot\:}{#1}
}
\DeclarePairedDelimiterX{\opnorm}[1]\lVert\rVert{
	\ifblank{#1}{\:\cdot\:}{#1}
}

\newcommand{\iu}{\mathrm{i}\mkern1mu}

\newcommand{\nnum}{\mathbb{N}}
\newcommand{\rnum}{\mathbb{R}}
\newcommand{\cnum}{\mathbb{C}}

\newcommand{\hilbert}{\mathcal{H}}
\newcommand{\domain}{\mathcal{D}}

\newcommand{\bounded}{\mathcal{B}}
\newcommand{\Uop}{\mathcal{U}}

\newcommand{\Jop}{\mathcal{J}}

\newcommand{\Jmin}{J_{\mathrm{min}}}

\newcommand{\Hmin}{H_{\mathrm{min}}}

\newcommand{\variation}{\mathcal{V}}

\newcommand{\hmat}{H_{\mathrm{mat}}}
\newcommand{\aint}{\Sigma}
\newcommand{\pint}{\Pi}
\newcommand{\dspace}{\mathcal{N}}

\numberwithin{equation}{section}

\newkeytheorem{maintheorem}[
    name={Theorem},
    refname={Theorem,Theorems},
    Refname={Theorem,Theorems},
]

\usepackage{geometry}
\usepackage{orcidlink}
\usepackage[foot]{amsaddr}
\newcommand\blfootnote[1]{%
  \begingroup
  \renewcommand\thefootnote{}\footnote{#1}%
  \addtocounter{footnote}{-1}%
  \endgroup
}

\title{Self-adjoint extensions of $k$-photon light--matter Hamiltonians}
\hypersetup{pdftitle={Self-adjoint extensions of k-photon light--matter Hamiltonians}}

\author{Felix Fischer\textsuperscript{\,1,*}\hspace{2pt}\orcidlink{0009-0001-3040-8184}\hspace{1pt}}

\author{Felix Knapp\textsuperscript{\,1}\hspace{2pt}\orcidlink{0009-0000-1419-1440}\hspace{1pt}}

\author{Daniel Burgarth\textsuperscript{\,1}\hspace{2pt}\orcidlink{0000-0003-4063-1264}\hspace{1pt}}

\author{Davide Lonigro\textsuperscript{\,1}\hspace{2pt}\orcidlink{0000-0002-0792-8122}\hspace{1pt}}

\address{\footnotesize \textsuperscript{1}Institute of Theoretical Physics, Friedrich-Alexander-Universität Erlangen-Nürnberg, Staudtstraße 7, 91058 Erlangen, Germany}

\renewcommand*{\thefootnote}{\fnsymbol{footnote}}

\begin{document}
\begin{abstract}
    Multiphoton light--matter interactions, in which a bosonic mode exchanges $k$ excitations at a time with a quantum system, are a source of genuine nonlinearity in quantum optics and are increasingly accessible experimentally. Here we study the class of operators $H = \hmat\otimes\id + \id\otimes\omega a^\ast a + \Sigma\otimes(a^\ast)^k + \Sigma^\ast\otimes a^k$ on $\hilbert\otimes L^2(\rnum)$, coupling a single bosonic mode to an arbitrary matter system through a bounded operator $\Sigma$. When $\Sigma$ is normal and nonzero, we prove that $H$ is self-adjoint if and only if $k\leq2$; for $k\geq3$ we compute the deficiency indices, parametrise all self-adjoint extensions, and show that every extension has purely discrete spectrum whenever the matter system is finite-dimensional. Our analysis rests on a block Jacobi decomposition paired with a suitable unitary transformation depending on the polar decomposition of $\Sigma$. The normality of $\Sigma$ is optimal: a $k$-photon Jaynes--Cummings model, with non-normal coupling, remains self-adjoint for every $k$. We illustrate our results on the $k$-photon Rabi and Dicke models.
\end{abstract}

\maketitle
\thispagestyle{empty}

\footnotetext{*Corresponding author: \href{mailto:felix.o.fischer@fau.de}{\texttt{felix.o.fischer@fau.de}}.}

\vspace{-0.5cm}
\noindent\blfootnote{2020 \textit{Mathematics Subject Classification}. 47B36, 47B25, 81Q10, 81Q12, 46N50.}%
\small \textbf{Keywords}: self-adjoint extensions, deficiency indices, block Jacobi operators, multiphoton interactions, quantum Rabi model.\normalsize

\section{Introduction}\label{sec:intro}
The quantum Rabi model is the paradigmatic model for light--matter interactions. It describes the interaction of a single bosonic mode with a spin degree of freedom, modelled by a Hamiltonian on $\cnum^2\otimes L^2(\rnum)$ acting as
\begin{equation}
    \label{eq:rabi}
    H= \id_{\cnum^2}\otimes \omega a^\ast a + \frac{\Omega}{2} \sigma_z \otimes\id_{L^2(\rnum)} + g\,\sigma_x \otimes (a^\ast + a) \, , 
\end{equation}
where $\omega$ is the frequency of the light field, $a^\ast$ and $a$ are the creation and annihilation operators of said field, $\Omega$ is the energy splitting of the spin, $\sigma_x,\sigma_z$ are the Pauli matrices, and $g$ is the coupling constant. This operator, when initially defined on $\cnum^2\otimes\domain_0$, with $\domain_0$ the span of the eigenstates $\phi_n$ of the number operator $\mathcal{N}=a^\ast a$ (the Fock states), is essentially self-adjoint.

This Hamiltonian and related models, including the Jaynes--Cummings model, the Dicke model, or multimode generalisations are ubiquitous in the physical literature, cf.~\cite{xie-quantumrabimodel-2017,larson-jaynescummingsmodel-2024} and references therein. All such models share a common feature: their Hamiltonians have an interaction term linear in $a$ and $a^\ast$, describing the exchange of one photon at a time.
In light of recent experimental realisations of genuine multiphoton interactions~\cite{corona-thirdorderspontaneousparametric-2011,boutin-effecthigherordernonlinearities-2017,chang-observationthreephotonspontaneous-2020,menard-emissionphotonmultiplets-2022,bencheikh-demonstratingquantumproperties-2022,eriksson-universalcontrolbosonic-2024,gregory-foursixphotonstimulated-2026,bazavan-squeezingtrisqueezingquadsqueezing-2026,Deng26}, there is an increased need for a better theoretical understanding of $k$-photon generalisations of \cref{eq:rabi}, the simplest instance being the so-called $k$-photon Rabi model:
\begin{equation}
    \label{eq:k-photon-rabi}
        H = \id_{\cnum^2}\otimes \omega a^\ast a + \frac{\Omega}{2} \sigma_z \otimes\id_{L^2(\rnum)}+ g\,\sigma_x\otimes \left((a^\ast)^k + a^k \right) \, .
\end{equation}
On the mathematical level, however, the presence of terms of order $k\geq3$ in $a,a^\ast$ makes the very question of essential self-adjointness of the operator nontrivial. This is exemplified by a series of recent results~\cite{ashhab-finitedimensionalapproximationsgeneralized-2026,ashhab-fractionalsqueezingspectra-2026,fischer-selfadjointrealizationshigherorder-2026,fischer-essentiallysingularlimits-2026} addressing the so-called higher-order squeezing operators, which correspond to specific polynomials in $a$ and $a^\ast$ of order higher than $2$, without any spin degree of freedom. These operators were shown not to be essentially self-adjoint on $\domain_0$, and to have deficiency indices equal to $(k,k)$.
In other words, additional information in the form of boundary conditions must be supplied in order to obtain well-defined dynamics.

It is then natural to ask whether a similar scenario arises for models describing $k$-photon light--matter interactions. A first result in this direction was obtained by Braak~\cite{braak-$k$photonquantumrabi-2025} for the specific case of the $k$-photon Rabi model.
Through a direct computation involving the Segal--Bargmann transform (which was famously used by the same author to compute the spectrum of the Rabi model~\cite{braak-integrabilityrabimodel-2011}), he showed that the generalised eigenvalue equation of the $k$-photon Rabi model has square-integrable solutions for all complex eigenvalues, contradicting previous results from Zhang~\cite{zhang-2modekphotonquantum-2017}.
This approach relies on an explicit computation of the generalised eigenvectors via the Bargmann transform, and therefore it does not extend to more general models for which such a computation is out of reach. A more general method, capable of treating a wider class of $k$-photon interactions at once, is therefore called for.

In this paper, we consider the following class of Hamiltonians on $\hilbert\otimes L^2(\rnum)$, where $\hilbert$ is an arbitrary (potentially infinite-dimensional) Hilbert space representing the matter system
\begin{equation}
    \label{eq:op}
    H = \id_\hilbert\otimes\omega a^\ast a  +  \hmat\otimes \id_{L^2(\rnum)}  + \Sigma\otimes (a^\ast)^k  + \Sigma^\ast\otimes a^k,
\end{equation}
where $\hmat$ is a bounded self-adjoint operator representing the free matter system, and $\Sigma$ is a \textit{normal} ($\Sigma^\ast \Sigma=\Sigma \Sigma^\ast$) bounded operator modulating the light--matter interaction. We initially define the operator with domain $\hilbert\otimes\domain_0$, $\domain_0$ again being the span of Fock states (choosing the Schwartz space $\mathcal{S}(\rnum)$ would yield identical results, cf. \cref{rem:felixs_choice}), and then consider its closure. Our results can be summarised as follows:

\begin{maintheorem}[\cref{prop:ess-sa,thm:main-result,cor:main-result-finite-dim,prop:main-result-inverse}]
Let $H$ be the closure of the operator defined above on $\hilbert\otimes\domain_0$, with $\Sigma\in\bounded(\hilbert)$ being normal and such that the inverse of its restriction to $(\Ker\Sigma)^\perp$ is bounded. Then:
\begin{itemize}
    \item[(i)] For $k\leq2$, $H$ is self-adjoint;
    \item[(ii)] For $k\geq3$ and $\Sigma\neq0$, $H$ is not self-adjoint and its deficiency spaces are given by
    \begin{equation}
        \dspace(\pm \iu) \cong \bigoplus_{m = 0}^{k-1}(\Ker\Sigma)^\perp \, .
    \end{equation}
    In particular, if $\dim\hilbert=d<\infty$, its self-adjoint extensions are parametrised by $kd'\times kd'$ unitary matrices, where $d'=\dim(\Ker\Sigma)^\perp$, and they all have purely discrete spectrum.
\end{itemize}
\end{maintheorem}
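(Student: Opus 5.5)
The plan is to decompose $\hilbert\otimes L^2(\rnum)$ so that $H$ becomes an orthogonal sum of block Jacobi operators, and then to reduce those block Jacobi operators to scalar Jacobi operators by diagonalising $\Sigma$.

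\textbf{Block structure.} First note that $H$ only couples Fock states whose indices differ by a multiple of $k$. Hence $\hilbert\otimes L^2(\rnum)=\bigoplus_{m=0}^{k-1}\hilbert\otimes\ell^2(\{m+jk\}_{j\geq0})$, and $H$ is reduced by this decomposition. On the $m$-th summand, $H$ acts as a block Jacobi operator $J_m$ with diagonal entries $B_j=\hmat+\omega(m+jk)\id$ and off-diagonal entries $A_j=c_j\Sigma$. The scalars are
\begin{equation}
c_j=\sqrt{(m+jk+1)\cdots(m+jk+k)}\sim (jk)^{k/2}.
\end{equation}

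\textbf{Diagonalising $\Sigma$.} Write the polar decomposition $\Sigma=U|\Sigma|$. Since $\Sigma$ is normal, $U$ can be chosen unitary and commuting with $|\Sigma|$ on $(\Ker\Sigma)^\perp$. Conjugating the $j$-th block by $U^{j}$ turns the off-diagonal entries into $c_j|\Sigma|$. The diagonal becomes $U^{-j}\hmat U^{j}+\omega(m+jk)$, which is a bounded perturbation plus $\omega(m+jk)\id$. By Kato--Rellich, the bounded part (the matter Hamiltonian, uniformly bounded) does not change self-adjointness or the deficiency indices, so it suffices to treat $\hmat=0$.

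The problem then splits. On $\Ker\Sigma$ we get a diagonal operator, which is self-adjoint. On $(\Ker\Sigma)^\perp$ we use the spectral theorem for the positive operator $|\Sigma|$, which is bounded below by some $s_0>0$ by the bounded-inverse hypothesis. This writes $J_m$ as a direct integral over $s\in\sigma(|\Sigma|)$ of scalar Jacobi operators with $b_j=\omega(m+jk)$ and $a_j=s c_j$.

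\textbf{Scalar Jacobi analysis.} For $k\leq2$ we have $\sum 1/a_j=\infty$, so Carleman's criterion gives essential self-adjointness. For $k\geq3$, $a_j\sim s(jk)^{k/2}$ dominates $b_j$, which grows only linearly. Standard results (Berezanskii-type limit-circle criteria, with $a_{j-1}a_{j+1}\le a_j^2$ asymptotically and $\sum 1/a_j<\infty$) then show each scalar operator is in the limit-circle case, with deficiency indices $(1,1)$. The step I expect to be the main obstacle is making this uniform in $s\in[s_0,\|\Sigma\|]$. The deficiency solutions $P(z)$ and $Q(z)$ must have $\ell^2$ norms bounded uniformly in $s$ and measurable in $s$, so that the direct integral of the one-dimensional deficiency spaces is exactly $L^2(\sigma(|\Sigma|),d\mu)$, and hence $\cong(\Ker\Sigma)^\perp$. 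I would obtain this by a transfer-matrix estimate. Writing $\psi_{j+1}=((z-b_j)\psi_j-a_{j-1}\psi_{j-1})/a_j$, the products of transfer matrices converge uniformly in $s\ge s_0$ because $\sum |z-b_j|/a_j<\infty$ with $s$-independent bounds; this is exactly where the bounded-inverse assumption enters. Summing over $m$ then gives $\dspace(\pm\iu)\cong\bigoplus_{m=0}^{k-1}(\Ker\Sigma)^\perp$.

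\textbf{Extensions and spectrum.} In finite dimension $d$, von Neumann's theorem parametrises the self-adjoint extensions by unitaries between the two $kd'$-dimensional deficiency spaces, i.e.\ by $U(kd')$. For discreteness of the spectrum, the key fact is that in the limit-circle case all solutions are $\ell^2$ for every $z$. Consequently the resolvent of any extension is Hilbert--Schmidt: its kernel is built from $P$ and $Q$ at a fixed $z$, and with finitely many blocks this suffices. Moreover, the $\Ker\Sigma$ part has eigenvalues $\omega(m+jk)\to\infty$, so its resolvent is compact as well. Since any two extensions differ by a finite-rank resolvent perturbation, compactness of the resolvent, and hence purely discrete spectrum, transfers to every extension.
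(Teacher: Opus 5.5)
Your route differs from the paper's after the common first steps (block decomposition and polar-decomposition gauge). The paper applies the \emph{block} version of \'Swiderski's theorem (\cref{thm:swiderski}) directly to $\tilde J^{(m)}$. Its off-diagonal coefficients $\beta_{m+nk}\pint$ give $C=\pint/\norm{\pint}_\infty$, which is positive definite because $\pint$ is bounded below, so all spectral values of $|\Sigma|$ are handled at once. You instead fibre over $\sigma(|\Sigma|)$ and reduce to scalar Jacobi operators, essentially the direct-integral argument the paper only sketches in \cref{rem:finite-dim-argument}.

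As written, two of your steps fail for $k\in\{3,4\}$ when $\omega>0$. With $a_j=sc_j$ and $b_j=\omega(m+jk)$ one has $b_j/a_j\asymp(\omega/s)\,j^{1-k/2}$, i.e.\ $j^{-1/2}$ for $k=3$ and $j^{-1}$ for $k=4$. So the summability $\sum_j|z-b_j|/a_j<\infty$, on which your uniform transfer-matrix estimate rests, is false in exactly the lowest nontrivial cases, including Braak's $k=3$ Rabi model (it does hold for $k\geq5$ or $\omega=0$). An $\ell^1$-perturbation (Levinson/Gronwall) argument around the limiting rotation $\left(\begin{smallmatrix}0&-1\\1&0\end{smallmatrix}\right)$ therefore does not close. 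The products of transfer matrices also do not converge in a useful sense: their determinants are $a_0/a_n\to0$, and what you actually need is a uniform bound on $\sqrt{a_n}\,|\psi_n|$. For the same reason, your per-fibre limit-circle claim is not covered by Berezanskii's criterion, which is stated for a bounded diagonal $(b_j)$, whereas here $b_j$ grows linearly. What rescues $k=3,4$ is that $b_j/a_j$, although not summable, has summable \emph{variation} $O(j^{-k/2})$. This adiabatic rather than perturbative control is exactly what \'Swiderski's theorem uses, and it is the content of \cref{lem:variation}.

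The gap is repairable, and the uniformity you flag as the main obstacle is in fact unnecessary. First, take the per-fibre limit-circle property from a result that tolerates the linear diagonal. Either the scalar case of \cref{thm:swiderski} works, or the deficiency indices $(k,k)$ of $\omega a^\ast a+s((a^\ast)^k+a^k)$ quoted in \cref{rem:finite-dim-argument}; the latter force indices $(1,1)$ on each of its $k$ Jacobi components, and these components are exactly your fibres.

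Then use that $J^\ast$ is the maximal operator. Let $p_n(z;s)$ be the first-kind polynomials of the fibre at $s$, set $W(s)=\sum_n|p_n(z;s)|^2$, and let $\mu_{u_0}$ be the spectral measure of $|\Sigma_1|$ associated with $u_0$. Then $u\in\Ker(J^\ast-z)$ if and only if $u_n=p_n(z;|\Sigma_1|)u_0$ (in the functional-calculus sense) with $\norm{u}^2=\int W\,d\mu_{u_0}<\infty$. Since $1\leq W(s)<\infty$ for every $s>0$, the map $u\mapsto W(|\Sigma_1|)^{1/2}u_0$ is unitary from $\Ker(J^\ast-z)$ onto $(\Ker\Sigma)^\perp$. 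For $k\leq2$, $W\equiv\infty$ on $(0,\infty)$ gives (i).

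This identification uses only $\Ker|\Sigma_1|=\{0\}$, so the repaired route would cover every nonzero normal $\Sigma$ without the bounded-inverse hypothesis. The paper's route, in exchange, avoids spectral theory of $\Sigma$ altogether. It also yields the sharper maximal indeterminacy $\dspace(z)=\{(P_n(z^\ast)x)_n:x\in\hilbert\}$, which it feeds into the Budyka--Malamud discreteness result.

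Your Hilbert--Schmidt argument for discreteness is fine once you add back $\hmat\otimes\id_{L^2(\rnum)}$, since bounded perturbations preserve compactness of the resolvent. However, the $\Ker\Sigma$ summand needs $\omega>0$. For $\omega=0$, $\hmat=0$ and $\Ker\Sigma\neq\{0\}$ it contributes the eigenvalue $0$ with infinite multiplicity, so the discreteness claim fails there; the paper's proof shares this caveat.
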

These results hold, in particular, whenever the matter space $\hilbert$ is finite-dimensional, provided that $\Sigma$ is normal and nonzero. This includes, in particular, the $k$-photon quantum Rabi model (\cref{ex:krabi}) and its generalisation to an arbitrary number of spins (\cref{ex:kdicke}). Importantly, the assumption that $\Sigma$ be normal \textit{cannot} be relaxed: the so-called $k$-photon Jaynes--Cummings model (\cref{ex:kjc}) is self-adjoint even for $k\geq3$, because of the presence of a fundamental symmetry---a conserved number of excitations---analogous to the one already exhibited by the standard Jaynes--Cummings model.

The key structural observation underlying our results is that all operators of the form \cref{eq:op} decompose as a direct sum of finitely many \textit{block Jacobi operators} (\cref{lem:decomposition}), following the approach initiated in~\cite{fischer-essentiallysingularlimits-2026}. For $k\leq 2$, self-adjointness then follows from the classical Carleman criterion. For $k\geq 3$, a further step is needed: a unitary transformation, constructed from the polar decomposition of $\Sigma$, brings each Jacobi component into a form whose spectral properties can be read off from the general theory of block Jacobi operators~\cite{swiderski-spectralpropertiesblock-2018}. It is in this step that the normality of $\Sigma$ enters in an essential way (also cf. \cref{rem:finite-dim-argument}); the Jaynes--Cummings counterexample (\cref{ex:kjc}) confirms that this assumption cannot be removed. The case where $\Sigma$ does not have a bounded inverse is then handled by a reduction argument.

This decomposition strategy was introduced in~\cite{fischer-essentiallysingularlimits-2026} to study higher-order squeezing operators. There, however, the operators decompose into \textit{scalar} Jacobi operators, to which \'Swiderski's theorem applies directly, with no further work required. The presence of a matter system is what makes the present problem genuinely harder: the decomposition now produces \textit{block} Jacobi operators whose coefficients do not satisfy the theorem's hypotheses as they stand, and it is precisely to remedy this that the unitary transformation above is needed.

This approach dispenses with the model-specific techniques previously employed for related problems, such as Birkhoff-type asymptotics of recurrence relations~\cite{fischer-selfadjointrealizationshigherorder-2026} or the Bargmann-space formalism~\cite{braak-integrabilityrabimodel-2011,braak-$k$photonquantumrabi-2025}: no detailed knowledge of the interaction operator $\Sigma$ is required beyond normality.

More broadly, the connection between multiphoton Hamiltonians and block Jacobi operators opens a direct channel between the mature spectral theory of such operators and the physics of higher-order light--matter interactions. For instance, the subordinacy theory of Moszy\'nski~\cite{moszynski-barriernonsubordinacyabsolutely-2025} could provide sufficient conditions for the emergence of absolutely continuous spectrum, while eigenvalue asymptotics from the Jacobi literature could be transferred to multiphoton models. More ambitiously, the same decomposition strategy applies in principle to arbitrary polynomials in $a$ and $a^\ast$, including the multimode setting relevant to continuous-variable quantum computing. These directions will be pursued in future work.

Our work is structured as follows. \cref{sec:preliminaries} contains all required mathematical preliminaries on block Jacobi operators;
in \cref{sec:main-result} we state and prove our main results; in \cref{sec:examples} we apply our results to the $k$-photon quantum Rabi model and its multi-spin generalisation, and discuss the $k$-photon Jaynes--Cummings model as an illustrative counterexample where the assumptions of our theorems are not met.

\section{Preliminaries on block Jacobi operators}
\label{sec:preliminaries}

As anticipated, in order to investigate the self-adjointness (or lack thereof) of the class of operators \eqref{eq:op}, we will decompose them as a direct sum of block tridiagonal (Jacobi) operators, cf. \cref{lem:decomposition}. In this section we provide some results about the spectral properties of block Jacobi operators. We refer to~\cite{schmudgen-unboundedselfadjointoperators-2012,schmudgen-momentproblem-2017,teschl-jacobioperatorscompletely-1999,akhiezer-classicalmomentproblem-2020} for Jacobi operators, and~\cite[Chapter~VII.2]{berezanskii-expansioneigenfunctionsselfadjoint-1968} and~\cite{swiderski-spectralpropertiesblock-2018,swiderski-asymptoticzerosdistribution-2025,moszynski-barriernonsubordinacyabsolutely-2025,budyka-selfadjointnessdiscretenessspectrum-2020,budyka-deficiencyindicesdiscreteness-2022a,budyka-deficiencyindicesblock-2024,braeutigam-deficiencyindicesoperators-2019,schulz-baldes-rotationnumbersjacobi-2007} for block Jacobi operators with possibly unbounded coefficients.

Let $\hilbert$ be a nontrivial complex separable Hilbert space equipped with a scalar product $\braket{\cdot,\cdot}$ and the associated norm $\norm{\cdot}$.
We denote the operator norm on $\hilbert$ by $\norm{\cdot}_\infty$, and the space of all bounded operators on $\hilbert$ by $\bounded(\hilbert)$. Moreover, we denote by $\ell(\nnum,\hilbert)$ the vector space of all sequences $u = (u_n)_{n \in \nnum}$, with $u_n \in \hilbert$ for all $n \in \nnum$, and introduce the space of square-summable sequences
\begin{equation}
	\ell^2(\nnum,\hilbert) = \left\{ u\in \ell(\nnum,\hilbert) \, : \,  \sum_{n = 0}^\infty\norm{u_n}^2 < \infty \right\} \, ,
\end{equation}
which is a Hilbert space with respect to the scalar product
\begin{equation}
	\braket{u,v}_{\ell^2(\nnum,\hilbert)} = \sum_{n = 0}^\infty \braket{u_n,v_n} \, , \quad \norm{u}_{\ell^2(\nnum,\hilbert)} = \sqrt{\braket{u,u}_{\ell^2(\nnum,\hilbert)}}\, .
\end{equation}
Finally, we denote by $\ell_0(\nnum,\hilbert)$ the space of finitely supported sequences
\begin{equation}
	\ell_0(\nnum,\hilbert) = \left\{u \in \ell(\nnum,\hilbert) \, : \, \exists N \in \nnum \: \text{s.t.} \:  u_n = 0 \quad \forall n \geq N\right\}.
\end{equation}
It is well-known that $\ell_0(\nnum,\hilbert)$ is a dense subspace of $\ell^2(\nnum,\hilbert)$.

\begin{definition}[Block Jacobi operator]
	\label{def:block-jacobi}
	Fix two sequences $(A_n)_{n \in \nnum}$, $(B_n)_{n \in \nnum}\subset\mathcal{B}(\hilbert)$ such that, for every $n \in \nnum$, $A_n$ has bounded inverse and $B_n$ is self-adjoint. We define the operator $\Jop$ on $\ell(\nnum,\hilbert)$ by
	\begin{align}
		(\Jop u)_n & = A_n u_{n+1} + B_n u_n + A_{n-1}^\ast u_{n-1}\,, \quad n \geq 1\,, \\
		(\Jop u)_0 & = A_0 u_1 + B_0 u_0.
	\end{align}
	Moreover, let $\Jmin$ be its restriction to $\ell_0(\nnum,\hilbert)$, and $J = \overline{\Jmin}$ its closure.
\end{definition}
One can see that $J\in\bounded(\ell^2(\nnum,\hilbert))$ if and only if the sequences $(A_n)_{n\in\nnum},(B_n)_{n\in\nnum}$ are bounded in the operator norm, that is, $\sup_{n\in\nnum}\|A_n\|_\infty<\infty$ and $\sup_{n\in\nnum}\|B_n\|_\infty<\infty$. For the purpose of this paper we will be mainly interested in the case where the two sequences above are unbounded, and so is $J$.
We remark that the action of a block Jacobi operator can be represented by an infinite block tridiagonal matrix:
    \begin{equation}
        \Jop = \begin{pmatrix}
                B_0 & A_0 & 0 & 0 & 0 & \dots \\
                A_0^\ast & B_1 & A_1 & 0 & 0 & \dots \\
                0 & A_1^\ast & B_2 & A_2 & 0 & \dots \\
                0 & 0 & A_2^\ast & B_3 & A_3 & \dots \\
                \vdots & \vdots & \vdots & \vdots & \vdots  & \ddots
               \end{pmatrix} 
    \end{equation}
We follow the same convention as in~\cite{fischer-essentiallysingularlimits-2026}, and respectively use calligraphic letters $\Jop$ and straight letters $J$ to denote the operator on the vector space $\ell(\nnum,\hilbert)$ and its restriction as a closed unbounded operator on the Hilbert space $\ell^2(\nnum,\hilbert)$. Note that $\Jmin$ is symmetric by construction, therefore so is its closure $J$.

Of particular relevance for block Jacobi operators is the \textit{generalised eigenvalue equation}: for $z \in \cnum$,
\begin{equation}
    \label{eq:generalized-eigenvalue}
    A_n u_{n + 1} + B_n u_n + A_{n-1}^\ast u_{n-1} = z u_n \quad \forall n \geq 1\, ,
\end{equation}
with $u\in\ell(\nnum,\hilbert)$. As this is a second-order recurrence relation and each $A_n$ has bounded inverse, specifying the initial conditions $u_0$ and $u_1$ is sufficient to determine a unique solution $u \in \ell(\nnum,\hilbert)$.
If the particular initial condition
\begin{equation}
    \label{eq:generalized-eigenvalue-initial}
    A_0 u_1 + B_0 u_0 = z u_0\, , 
\end{equation}
is satisfied, $u$ is a solution of the equation $\Jop u = z u$.

In the block Jacobi case, we also introduce an operator-valued generalised eigenvalue equation
\begin{equation}
    \label{eq:generalized-eigenvalue-matrix}
    A_n U_{n+1} + B_n U_n + A_{n-1}^\ast U_{n-1} = z U_n \quad \forall n \geq 1
\end{equation}
for $U \in \ell(\nnum,\bounded(\hilbert))$, the latter being the space of all $\bounded(\hilbert)$-valued sequences.
Again, specifying the initial conditions $U_0,U_1 \in \bounded(\hilbert)$ fixes the sequence $U$. Two specific solutions of \eqref{eq:generalized-eigenvalue-matrix}, analogous to the \textit{orthogonal polynomials} in the scalar case, are introduced in the following, cf.~\cite[p.~560]{berezanskii-expansioneigenfunctionsselfadjoint-1968}:
\begin{definition}[Orthogonal operator polynomials]
    \label{def:orthogonal-polynomials}
    The \textit{orthogonal operator polynomials of first and second kind} are the unique solutions $P(z) = (P_n(z))_{n \in \nnum}, Q(z) = (Q_n(z))_{n \in \nnum} \in \ell(\nnum,\bounded(\hilbert))$ of \cref{eq:generalized-eigenvalue-matrix} corresponding to the initial conditions
    \begin{alignat}{2}
        P_0(z) & = \id \quad & P_1(z) & = A_0^{-1}(z-B_0) \, ,  \\
        Q_0(z) & = 0 \quad & Q_1(z) & = A_0^{-1} \, .
    \end{alignat}
\end{definition}
It follows that $P_n(z)$ and $Q_{n+1}(z)$ are operator-valued polynomials of degree $n$ in $z$.
Moreover, $P(z)$ additionally satisfies the operator-valued version of the initial condition \cref{eq:generalized-eigenvalue-initial}, that is,
\begin{equation}
    \label{eq:generalized-eigenvalue-initial-matrix}
    A_0 U_1 + B_0 U_0 = z U_0\, , 
\end{equation}
whence 
\begin{equation}
    \Jop P(z) = z P(z).
\end{equation}

With these definitions at hand, we recall the following result on the deficiency subspaces $ \dspace(z) = \Ran(J-z)^\perp =  \Ker(J^\ast-z^\ast)$ of $J$ for nonreal $z$, cf.~\cite[Theorem VII.2.7]{berezanskii-expansioneigenfunctionsselfadjoint-1968} and~\cite{braeutigam-deficiencyindicesoperators-2019}:
\begin{lemma}
    \label{lem:deficiency-spaces}
    Let $J$ be the block Jacobi operator from \cref{def:block-jacobi}, and let $P(z)$ be the associated orthogonal polynomials of first kind from \cref{def:orthogonal-polynomials}.
    For every $z \in \cnum\setminus\rnum$ the deficiency subspaces of $J$ are given by
    \begin{equation}
        \dspace(z) = \left\{ (P_n(z^\ast) x)_{n \in \nnum} \, : \, x \in \hilbert, \;\sum_{n = 0}^\infty \norm{P_n(z^\ast) x}^2 < \infty \right\} \, .
    \end{equation}
    Moreover, the deficiency indices of $J$,
\begin{equation}
    n_+ = \dim \dspace(z)\,, \quad n_- = \dim \dspace(z^\ast) \quad \Imag z > 0 \,, 
\end{equation}
are independent of $z$. 
\end{lemma}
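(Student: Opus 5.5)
We need to prove Lemma \ref{lem:deficiency-spaces}: the deficiency space $\dspace(z)=\Ker(J^\ast-z^\ast)$ consists exactly of the square-summable sequences $(P_n(z^\ast)x)_n$, and the deficiency indices are constant on each half-plane.

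\textbf{Step 1: identify $J^\ast$.} We show that $J^\ast$ is the restriction of $\Jop$ to $\{u\in\ell^2(\nnum,\hilbert):\Jop u\in\ell^2(\nnum,\hilbert)\}$. Since $J^\ast=\Jmin^\ast$, it suffices to test against $\ell_0(\nnum,\hilbert)$. For $v\in\ell_0$ and $u\in\ell^2$ a finite summation by parts, using only the block tridiagonal structure, gives
\begin{equation}
\braket{u,\Jmin v}_{\ell^2(\nnum,\hilbert)}=\sum_{n}\braket{(\Jop u)_n,v_n}.
\end{equation}
Hence $u\in\domain(\Jmin^\ast)$ if and only if the sequence $\Jop u$ lies in $\ell^2(\nnum,\hilbert)$, and in that case $\Jmin^\ast u=\Jop u$. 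This step is bookkeeping; the only care needed is with the boundary row $n=0$.

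\textbf{Step 2: solve the eigenvalue equation.} A vector $u$ lies in $\dspace(z)=\Ker(J^\ast-z^\ast)$ if and only if $u\in\ell^2$ and $\Jop u=z^\ast u$, i.e.\ \cref{eq:generalized-eigenvalue} and \cref{eq:generalized-eigenvalue-initial} hold with $z$ replaced by $z^\ast$. Put $x=u_0$.
\begin{itemize}
\item The initial condition together with invertibility of $A_0$ forces $u_1=A_0^{-1}(z^\ast-B_0)x=P_1(z^\ast)x$.
\item Since each $A_n$ has a bounded inverse, the recurrence determines $u_{n+1}$ uniquely from $u_n$ and $u_{n-1}$.
\item The sequence $(P_n(z^\ast)x)_n$ satisfies the same recurrence and the same initial data, so by induction $u_n=P_n(z^\ast)x$ for all $n$.
\end{itemize}
Conversely, for any $x$ with $\sum_n\norm{P_n(z^\ast)x}^2<\infty$, the sequence $(P_n(z^\ast)x)_n$ satisfies $\Jop P(z^\ast)x=z^\ast P(z^\ast)x$, which lies in $\ell^2$. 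By Step 1 it therefore belongs to $\Ker(J^\ast-z^\ast)$. This proves the set identity.

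\textbf{Step 3: constancy of the indices.} This is the standard fact that, for any closed symmetric operator, $\dim\Ker(J^\ast-w)$ is constant on each open half-plane. To see it, fix $w$ with $\Imag w\neq0$. Then $\norm{(J-w)f}\geq|\Imag w|\,\norm{f}$, so $\Ran(J-w)$ is closed. For $|w'-w|<|\Imag w|$, a perturbation argument comparing $\Ran(J-w)$ and $\Ran(J-w')$ (the lemma on gaps between closed subspaces, e.g.\ Schmüdgen) gives equal codimensions. Connectedness of the half-plane then yields constancy. I will cite this rather than reprove it.

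The main obstacle is Step 1, namely making sure the adjoint is the maximal operator and not something smaller. Nothing is genuinely hard, since only finitely supported test vectors enter the summation by parts.
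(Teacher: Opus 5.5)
Your proof is correct. The paper does not prove this lemma itself but cites Berezanskii (Theorem VII.2.7) and Braeutigam--Mirzoev, and your argument is the standard proof underlying those references, so it makes the lemma self-contained. That argument has three parts: identifying $J^\ast=\Jmin^\ast$ with $\Jop$ on $\{u\in\ell^2(\nnum,\hilbert):\Jop u\in\ell^2(\nnum,\hilbert)\}$; using the invertibility of $A_0$ and of the $A_n$ to force $u_n=P_n(z^\ast)u_0$; and invoking the general constancy of $\dim\Ker(J^\ast-w)$ on each open half-plane for closed symmetric operators.
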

If $n_+ = n_-  = 0$, then $J$ is self-adjoint; if all solutions of \cref{eq:generalized-eigenvalue} are square-summable, we call $J$ \textit{maximally indeterminate}, and in particular, we obtain
\begin{equation}
    \dspace(z) = \left\{ (P_n(z^\ast) x)_{n \in \nnum} \, : \, x \in \hilbert \right\} \cong \hilbert \cong \dspace(z^\ast) \, .
\end{equation}

Next we recall a sufficient condition for the self-adjointness of $J$:
\begin{proposition}
    \label{prop:carleman}
    Let $J$ be the block Jacobi operator from \cref{def:block-jacobi}, and assume that the Carleman condition
    \begin{equation}
        \label{eq:carleman}
        \sum_{n = 0}^\infty \frac{1}{\norm{A_n}_\infty} = \infty
    \end{equation}
    is satisfied.
    Then $J$ is self-adjoint~\cite[Theorem VII.2.9]{berezanskii-expansioneigenfunctionsselfadjoint-1968}.
\end{proposition}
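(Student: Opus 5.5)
Since $J$ is symmetric and closed, it is self-adjoint precisely when its deficiency indices vanish. By \cref{lem:deficiency-spaces} this means showing that for every nonreal $z$ no nonzero $x\in\hilbert$ makes $(P_n(z^\ast)x)_{n}$ square-summable. Equivalently, every solution of $\Jop u = zu$ with $u\in\ell^2(\nnum,\hilbert)$ must vanish. I would prove this through a Wronskian/Green-type identity combined with a Carleman-type summation. For two sequences $u,v\in\ell(\nnum,\hilbert)$ define the Wronskian
\begin{equation}
  W_n(u,v) = \braket{A_n u_{n+1}, v_n} - \braket{u_n, A_n v_{n+1}} .
\end{equation}
Summation by parts on the block tridiagonal structure gives, for all $N$,
\begin{equation}
  \sum_{n=0}^{N} \bigl(\braket{(\Jop u)_n, v_n} - \braket{u_n,(\Jop v)_n}\bigr) = W_N(u,v).
\end{equation}
The self-adjointness of $B_n$ and the pairing of $A_n$ with $A_n^\ast$ are what make this identity hold.

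Next take $u=v$ with $\Jop u = z u$ and $u\in\ell^2$. The identity then yields $2\iu\,\Imag z\sum_{n\le N}\norm{u_n}^2 = W_N(u,u)$. Letting $N\to\infty$, the left-hand side tends to $2\iu\,\Imag z\,\norm{u}^2$. Therefore $|W_N(u,u)|\ge c>0$ for all large $N$ whenever $u\neq0$. On the other hand,
\begin{equation}
  |W_N(u,u)| \le 2\norm{A_N}_\infty \norm{u_N}\norm{u_{N+1}} \le \norm{A_N}_\infty\bigl(\norm{u_N}^2+\norm{u_{N+1}}^2\bigr).
\end{equation}
Dividing by $\norm{A_N}_\infty$ gives $c/\norm{A_N}_\infty \le \norm{u_N}^2+\norm{u_{N+1}}^2$ for large $N$. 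Summing over $N$, the right-hand side is finite because $u\in\ell^2$, while the left-hand side diverges by \cref{eq:carleman}. This contradiction forces $u=0$. Hence $\dspace(z)=\{0\}$ for all nonreal $z$, so $n_+=n_-=0$ and $J=\overline{\Jmin}$ is self-adjoint.

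The step needing the most care is the summation-by-parts identity. One must check that the boundary term at $n=0$ vanishes, which uses the first-row form $(\Jop u)_0 = A_0u_1+B_0u_0$, and that telescoping leaves only $W_N$ with the correct sign and adjoints in the operator-valued setting. The rest is the standard scalar Carleman argument; the only change is that $|a_n|$ is replaced by $\norm{A_n}_\infty$ and scalar products in $\hilbert$ are used throughout.
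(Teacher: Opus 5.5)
Your proof is correct. The Green identity telescopes to $W_N(u,u)=2\iu\,\Imag z\sum_{n\le N}\norm{u_n}^2$: the $B_n$ terms cancel by self-adjointness, and the $n=0$ boundary term vanishes. Summing the bound $c/\norm{A_N}_\infty\le\norm{u_N}^2+\norm{u_{N+1}}^2$ over $N$ then contradicts the Carleman condition, so the deficiency spaces described in \cref{lem:deficiency-spaces} are trivial.

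The paper gives no proof of its own and only cites Berezanskii's Theorem VII.2.9. What you wrote is the classical Green's-formula/Carleman argument, so it gives a self-contained proof of the cited result.
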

In the other direction, we will state a sufficient condition for $J$ to be maximally indeterminate ($\dspace(z)\cong\hilbert$) due to \'Swiderski~\cite{swiderski-spectralpropertiesblock-2018}, which will be central in our analysis in \cref{sec:main-result}. To this purpose, we begin by recalling the following definition:
\begin{definition}[Total variation]
	\label{def:variation}
	Let $X \in \ell(\nnum,\bounded(\hilbert))$. The \textit{total variation} $\variation(X)$ of $X$ is defined by
	\begin{equation}
	    \variation(X) = \sum_{n=0}^\infty \norm{X_{n+1}-X_n}_\infty \,.
	\end{equation}
\end{definition}
\begin{theorem}[\cite{swiderski-spectralpropertiesblock-2018}]
    \label{thm:swiderski}
    Let $J$ be the block Jacobi operator from \cref{def:block-jacobi}. Assume the following additional conditions on the sequences $(A_n)_{n\in\nnum},(B_n)_{n\in\nnum}$:
       \begin{enumerate}[(i)]
        \item $ \variation\left((A_n^{-1})_{n \in \nnum}\right) + \variation\left((A_n^{-1} B_n)_{n \in \nnum}\right) + \variation\left((A_n^{-1} A_{n-1}^\ast)_{n \in \nnum}\right) < \infty$;
        \item $\sum_{n=0}^\infty \norm{A_n}_\infty^{-1} < \infty$, that is, the Carleman condition does not hold;
        \item The operator-norm limits 
        \begin{equation}
            T = \lim_{n \to \infty}A_n^{-1}, \quad Q = \lim_{n \to \infty} A_n^{-1}B_n, \quad R = \lim_{n \to \infty} A_n^{-1}A_{n-1}^\ast,\quad C = \lim_{n \to \infty} \norm{A_n}_\infty^{-1}A_n 
        \end{equation}
        exist in $\bounded(\hilbert)$; in particular, $C$ is invertible.
        \item There exists $z \in \cnum$ such that the quadratic form associated with the operator
        \begin{equation}
	\mathcal{F}(z) = \Real \left(
	\begin{pmatrix}
	 0 & -C \\ C & 0
\end{pmatrix}
\begin{pmatrix}
	0 & \id \\ -R & z T - Q
\end{pmatrix}
	\right),
\end{equation}
where $\Real A= \frac{1}{2}(A+A^\ast)$, is either strictly positive or strictly negative.
\end{enumerate}
Then, for all $z \in \cnum$, $\dspace(z)\cong\hilbert$. In particular, $J$ is not self-adjoint, but maximally indeterminate.
\end{theorem}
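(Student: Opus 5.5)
The plan is a discrete Lyapunov (Turán-determinant-type) argument on transfer matrices. The aim is to show that, for one suitable $z$, \emph{every} solution of \cref{eq:generalized-eigenvalue} is square-summable, with a bound uniform in the initial data. \Cref{lem:deficiency-spaces} then gives $\dspace(z)\cong\hilbert$.

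\textbf{Transfer matrices.} Fix $z$ as in (iv). Write \cref{eq:generalized-eigenvalue} as a first-order system on $\hilbert\oplus\hilbert$: with $v_n=(u_{n-1},u_n)^{T}$ we have $v_{n+1}=X_n(z)v_n$, where
\begin{equation*}
X_n(z)=\begin{pmatrix}0&\id\\ -A_n^{-1}A_{n-1}^\ast & zA_n^{-1}-A_n^{-1}B_n\end{pmatrix}.
\end{equation*}
By (iii), $X_n(z)\to X(z)=\begin{pmatrix}0&\id\\-R&zT-Q\end{pmatrix}$ in norm. By (i), the sequence $(X_n(z))_n$ has finite total variation.

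Condition (ii) forces $\norm{A_n}_\infty\to\infty$. So it is natural to put $C_n=\norm{A_n}_\infty^{-1}A_n\to C$ and $\Omega_n=\begin{pmatrix}0&-C_n\\C_n&0\end{pmatrix}$. Define the quadratic forms
\begin{equation*}
S_n(v)=\norm{A_n}_\infty\,\bigl\langle \Real\bigl(\Omega_n X_n(z)\bigr)v,v\bigr\rangle .
\end{equation*}
The form $\langle \Omega_n X_n v_n, v_n\rangle$ is essentially the discrete Wronskian $\langle A_n u_{n+1},u_n\rangle-\langle u_{n+1},A_n u_n\rangle$ rescaled by $\norm{A_n}_\infty^{-1}$. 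Its imaginary part satisfies an exact conservation law along solutions, while its real part changes only through the variation of the coefficients.

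\textbf{Key estimate.} I would prove that
\begin{equation*}
S_{n+1}(v_{n+1}) \le \bigl(1+\varepsilon_n\bigr) S_n(v_n),\qquad \sum_n\varepsilon_n<\infty .
\end{equation*}
Here $\varepsilon_n$ is controlled by $\norm{X_{n+1}-X_n}_\infty$, by $\norm{C_{n+1}-C_n}_\infty$, and by the relative change of $\norm{A_n}_\infty$; all three are summable by (i) and (iii). The argument is to expand $S_{n+1}(X_nv_n)$, use the recurrence, and compare with $S_n(v_n)$. The leading terms cancel by the Wronskian identity, so only difference terms remain.

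Condition (iv) says that $\Real(\Omega X(z))$ is strictly definite. By norm convergence, $\Real(\Omega_nX_n(z))\ge c>0$ for $n$ large (or $\le -c$, handled symmetrically). Combining this with the product bound $\prod(1+\varepsilon_n)<\infty$ gives
\begin{equation*}
c\,\norm{A_n}_\infty\norm{v_n}^2\le S_n(v_n)\le K\, S_{N}(v_{N}),
\end{equation*}
hence $\norm{u_n}^2\lesssim \norm{A_n}_\infty^{-1}\norm{v_N}^2$. By (ii), $\sum_n\norm{A_n}_\infty^{-1}<\infty$, so every solution is in $\ell^2(\nnum,\hilbert)$, uniformly in the initial data. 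In particular $(P_n(z^\ast)x)_n\in\ell^2$ for all $x$. Applying the argument also to $z^\ast$ (or using definiteness of the appropriate sign), \cref{lem:deficiency-spaces} yields $\dspace(z)\cong\hilbert$.

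\textbf{Extension to all $z$ and the main obstacle.} To pass from one $z$ to all $z\in\cnum$, I would use the standard variation-of-constants argument for Jacobi recurrences. If all solutions are square-summable for one $z_0$, then solutions for $z$ satisfy a Volterra-type sum equation with kernel built from square-summable solutions at $z_0$. A Gronwall-type estimate, using the uniform bound obtained above, then shows that all solutions are square-summable for every $z$. Non-self-adjointness follows since $\dim\dspace(\iu)\neq0$.

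The main obstacle is the key estimate itself. One has to arrange that the non-telescoping cross terms are bounded by the variations listed in (i) times $S_n(v_n)$, rather than by $\norm{v_n}^2$ times an unbounded factor $\norm{A_n}_\infty$. The first thing I would try is to normalise by $\norm{A_n}_\infty$ at exactly one place, so that the only unbounded quantity entering the cross terms is the ratio $\norm{A_{n+1}}_\infty/\norm{A_n}_\infty$. This ratio tends to $1$, and its deviation from $1$ is controlled via the convergence of $C_n$ and the bounded variation of $A_n^{-1}A_{n-1}^\ast$.
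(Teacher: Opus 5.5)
The paper does not prove this theorem; it quotes it from \'Swiderski and only checks its hypotheses. Your architecture is the right mechanism for results of this type: a generalised Tur\'an determinant, coercivity from (iv), the bound $\norm{u_n}^2\lesssim\norm{A_n}_\infty^{-1}$, and summability from (ii). But the key estimate, as you set it up, has a genuine gap.

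Write $\mathcal{A}_n=\norm{A_n}_\infty\Omega_n=\begin{pmatrix}0&-A_n\\A_n&0\end{pmatrix}$, so that $S_n(v)=\langle\Real(\mathcal{A}_nX_n(z))v,v\rangle$. Along a solution this equals $\Real\bigl(\langle A_nu_n,u_n\rangle-\langle A_nu_{n+1},u_{n-1}\rangle\bigr)$, which is a Tur\'an determinant rather than a Wronskian. Pairing the weight $A_n$ with $X_n$ is off by one index. The leading terms of $S_{n+1}(v_{n+1})-S_n(v_n)$ then do not cancel, and what survives involves $A_{n+1}-A_n$ and $A_n-A_{n-1}$. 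Its relative size is $\norm{A_{n+1}-A_n}_\infty/\norm{A_n}_\infty$.

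The quantities you propose to control this remainder by are not summable under (i)--(iii):
\begin{itemize}
\item For $A_n=(n+1)^2\id$ one has $\norm{A_{n+1}}_\infty/\norm{A_n}_\infty-1\sim 2/n$; the application has the same behaviour, since $\beta_{m+nk}\sim(nk)^{k/2}$. Bounding the remainder this way only yields $S_{n+1}\le(1+c/n)S_n$, i.e.\ polynomial growth of $S_n$.
\item The ratio need not even tend to $1$: $A_n=2^n\id$, $B_n=0$ satisfies all hypotheses, with $R=\tfrac12\id$.
\item $C_n$ need not have bounded variation: $\hilbert=\cnum$, $A_n=(n+1)^2e^{\iu(-1)^n/\log(n+2)}$, $B_n=0$ satisfies (i)--(iv), yet $|C_{n+1}-C_n|\sim2/\log n$.
\end{itemize}
In the scalar case with $a_n>0$ your normalisation can be rescued, because then $S_n=(a_n/a_{n-1})\tilde S_n$ (with $\tilde S_n$ as below) and $a_{n-1}/a_n$ has bounded variation. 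For operator coefficients there is no such scalar factorisation.

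The missing ingredient is the exact symplectic identity for the transfer matrices. Set $\mathcal{K}_n=\begin{pmatrix}0&A_n\\-A_n^\ast&0\end{pmatrix}$. A direct check gives
\begin{equation*}
    X_n(z)^\ast\mathcal{K}_nX_n(z)=\mathcal{K}_{n-1}+\begin{pmatrix}0&0\\0&(z-\bar z)\id\end{pmatrix} .
\end{equation*}
Because the first block row of $X_n$ is $(0\ \ \id)$, one also has $\Real(\mathcal{K}_mX_n)=-\Real(\mathcal{A}_mX_n)$. Now define $\tilde S_n(v):=\langle\Real(\mathcal{A}_{n-1}X_n(z))v,v\rangle$, with weight $A_{n-1}$ rather than $A_n$. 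Every solution then satisfies exactly
\begin{equation*}
    \tilde S_{n+1}(v_{n+1})-\tilde S_n(v_n)=\bigl\langle\Real\bigl(\mathcal{A}_n(X_{n+1}(z)-X_n(z))\bigr)v_{n+1},v_{n+1}\bigr\rangle+2\,\Imag z\,\Imag\langle u_{n+1},u_n\rangle .
\end{equation*}
By (iii)--(iv), for large $n$ we have $\tilde S_{n+1}(v)=\norm{A_n}_\infty\langle\Real(\Omega_nX_{n+1}(z))v,v\rangle\ge c\norm{A_n}_\infty\norm{v}^2$. So the right-hand side is bounded by $c^{-1}\bigl(\norm{X_{n+1}-X_n}_\infty+|\Imag z|\,\norm{A_n}_\infty^{-1}\bigr)\tilde S_{n+1}(v_{n+1})$, and this coefficient is summable by (i)--(ii). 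No control of $\norm{A_{n+1}}_\infty/\norm{A_n}_\infty$ or of $C_{n+1}-C_n$ is needed, and the rest of your argument then goes through.

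Finally, (ii) together with the invertibility of $C$ forces $T=0$, since $A_n^{-1}=\norm{A_n}_\infty^{-1}C_n^{-1}$. Hence $\mathcal{F}(z)$ does not depend on $z$, and the estimate applies to every $z\in\cnum$ at once. Your variation-of-constants extension to other $z$ is therefore unnecessary.
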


\section{Main Results}
\label{sec:main-result}

Throughout this work, we consider operators describing the interaction between a single bosonic mode and a matter system.
The matter system is represented by a (potentially infinite-dimensional) Hilbert space $\hilbert$; the bosonic mode is represented by the Hilbert space $L^2(\rnum)$ of square-integrable functions on the real line, according to the following standard rules. Given the sequence $(\phi_n)_{n\in\nnum}\subset L^2(\rnum)$ of normalised Hermite functions, one defines the operators $a,a^\ast$ on $\mspan(\phi_n)_{n\in\nnum}$ via
\begin{align}
    a \phi_0 & = 0 \, , \\
    a \phi_n & = \sqrt{n}\phi_{n-1} \quad \forall n \geq 1 \, , \\
    a^\ast \phi_n & = \sqrt{n+1} \phi_{n+1} \, , 
\end{align}
and linear extension. Physically, $\phi_n$ is interpreted as the state of the boson field with $n$ excitations (the $n$th Fock state), and $a,a^\ast$ as the annihilation and creation operators of the field. Moreover, the operator $a^\ast a$ satisfies $a^\ast a\phi_n=n\phi_n$ and is thus the number operator of the field; this is an essentially self-adjoint operator. The energy of the free field is then described by $\omega a^\ast a$, with $\omega \geq 0$ representing the energy of a single excitation.

\begin{definition}
	\label{def:op}
	Let $k \in \nnum$, $\omega \geq 0$, and $\hmat, \Sigma \in \bounded(\hilbert)$, with $\hmat^\ast=\hmat$. We define the operator $\Hmin$ on $\hilbert\otimes L^2(\rnum)$ via
    \begin{equation}
    	\Hmin = \id_{\hilbert} \otimes\omega a^\ast a   + \hmat\otimes\id_{L^2(\rnum)} \, +        
        \Sigma\otimes (a^\ast)^k + \Sigma^\ast \otimes a^k 
    \end{equation}
    with domain $\hilbert\otimes\domain_0$, where $\domain_0 = \mspan(\phi_n)_{n \in \nnum}$. 
    
    Moreover, we denote by $H = \overline{\Hmin}$ the closure of $\Hmin$.
\end{definition}
Above, $\id_{\hilbert} \otimes\omega a^\ast a   + \hmat\otimes\id_{L^2(\rnum)}$ represents the free (decoupled) energy of the matter--boson pair; the additional term represents an interaction implementing the creation or annihilation of $k$ field excitations, analogously to the $k$-photon Rabi model introduced in \cref{sec:intro} (cf. \cref{ex:krabi}). The operator $\Sigma$ implements the specific structure of the light--matter interaction. We stress that the free matter Hamiltonian $\hmat$ is assumed bounded throughout; this is automatic for finite-dimensional matter systems, but excludes an unbounded free matter dynamics---for instance, a second bosonic mode in the role of the matter system---which would require a genuine extension of our methods.

\begin{remark}\label{rem:felixs_choice}
The choice $\domain_0=\mspan(\phi_n)_{n\in\nnum}$ is not the only possible one. One can see that all results throughout the paper would hold true if $\domain_0$ in the initial definition of $\Hmin$ were to be replaced by the space
    \begin{equation}
        \domain\left(\overline{(a^\ast a)^{m/2}}\right) = \left\{ \psi = \sum_{n = 0}^\infty c_n \phi_n \, : \, \sum_{n = 0}^\infty n^{m} |c_n|^2 < \infty \right\} \,, \quad m \geq k \geq 2
    \end{equation}
    or the Schwartz space 
    \begin{equation}
        \mathcal{S}(\rnum) = \left\{ \psi = \sum_{n = 0}^\infty c_n \phi_n \, : \, \sum_{n = 0}^\infty n^{m} |c_n|^2 < \infty \quad \forall m \geq 0 \right\} \, .
    \end{equation}
    This follows from the fact that, because of $(a^\ast)^k$ and $a^k$ being relatively bounded with respect to $(a^\ast a)^{m/2}$ for all $m \geq k$, the closures $H$ of $\Hmin$ with either of these initial domains coincide.
\end{remark}

Clearly, the operator $H$ is symmetric. In order to determine whether it is self-adjoint, we follow an approach analogous to the one in~\cite{fischer-essentiallysingularlimits-2026}. We expand states in $L^2(\rnum)$ in the Fock basis, hence using the standard unitary equivalence
\begin{equation}
    L^2(\rnum)\cong\ell^2(\nnum),
\end{equation}
whence
\begin{equation}
    \hilbert\otimes L^2(\rnum)\cong \hilbert\otimes \ell^2(\nnum)\cong \ell^2(\nnum,\hilbert).
\end{equation}
In this representation, we will see that $H$ decomposes as a direct sum of block tridiagonal (Jacobi) operators (cf. \cref{lem:decomposition}) as a direct consequence of the interaction term in $H$ only containing finite powers of $a,a^\ast$. This will enable us to apply results from the theory of block Jacobi operators (cf. \cref{sec:preliminaries}).

\subsection{Case 1: $\Sigma$ with bounded inverse}

We begin by restricting our discussion to the case where $\Sigma$ has a bounded inverse. As we will show later on (\cref{prop:main-result-inverse}), this assumption can be removed. Here and in the following, $(\cdot,\cdot)$ will denote the Pochhammer symbol:
\begin{equation}
    \label{eq:def-pochhammer}
    (x,s) = \prod_{i =  1}^s (x+i-1) = x (x+1) \dots (x+s-1) \, .
\end{equation} 

\begin{proposition}
    \label{lem:decomposition}
	Let $H$ be the operator from \cref{def:op}, and suppose that $\Sigma$ has a bounded inverse.
	Then,
	\begin{equation}
	    \label{eq:decomposition}
		H \simeq \bigoplus_{m = 0}^{k-1} J^{(m)} \, ,
    \end{equation}
    where $J^{(m)}$ is the block Jacobi operator on $\ell^2(\nnum,\hilbert)$ (\cref{def:block-jacobi}) corresponding to the sequences $(A^{(m)}_r)_{r\in\nnum},(B^{(m)}_r)_{r\in\nnum}\subset\mathcal{B}(\hilbert)$ given by
    \begin{align}\label{eq:coefficients1}
    	A^{(m)}_r &= \beta_{m+rk} \aint^\ast, \, \\ \label{eq:coefficients2}B^{(m)}_r &= \omega (m + rk) \id_\hilbert + \hmat,
    \end{align}
with $\beta_n = \sqrt{(n+1,k)}$.
\end{proposition}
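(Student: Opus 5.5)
The plan is to split $\ell^2(\nnum,\hilbert)\cong\hilbert\otimes\ell^2(\nnum)$ according to the residue class of the Fock index modulo $k$, and to check that $\Hmin$ preserves each piece and acts there as a block Jacobi operator. For $m\in\{0,\dots,k-1\}$ let
\[
\hilbert_m=\overline{\mspan}\{x\otimes\phi_{m+rk}:x\in\hilbert,\ r\in\nnum\},
\]
so that $\hilbert\otimes\ell^2(\nnum)=\bigoplus_{m=0}^{k-1}\hilbert_m$. Each $\hilbert_m$ is identified with $\ell^2(\nnum,\hilbert)$ via the unitary $V_m:(u_r)_{r\in\nnum}\mapsto\sum_r u_r\otimes\phi_{m+rk}$. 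Under $V_m$, the finitely supported sequences $\ell_0(\nnum,\hilbert)$ correspond exactly to $\hilbert_m\cap(\hilbert\otimes\domain_0)$. Moreover, $\hilbert\otimes\domain_0$ is the algebraic direct sum of these intersections.

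Next I compute $\Hmin$ on $x\otimes\phi_{m+rk}$. Using $(a^\ast)^k\phi_n=\sqrt{(n+1,k)}\,\phi_{n+k}=\beta_n\phi_{n+k}$ and $a^k\phi_n=\beta_{n-k}\phi_{n-k}$ for $n\ge k$ (with $a^k\phi_n=0$ for $n<k$), I get
\[
\Hmin(x\otimes\phi_{m+rk})=\bigl(\omega(m+rk)+\hmat\bigr)x\otimes\phi_{m+rk}+\beta_{m+rk}\,\Sigma x\otimes\phi_{m+(r+1)k}+\beta_{m+(r-1)k}\,\Sigma^\ast x\otimes\phi_{m+(r-1)k}.
\]
The last term is absent when $r=0$, since $m<k$. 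So each $\hilbert_m$ is invariant and the $\hilbert_m$ are mutually orthogonal. Reading off the $r$-th component of $V_m^{-1}\Hmin V_m u$ gives
\[
\beta_{m+(r-1)k}\,\Sigma u_{r-1}+\bigl(\omega(m+rk)+\hmat\bigr)u_r+\beta_{m+rk}\,\Sigma^\ast u_{r+1}.
\]
This is exactly the block Jacobi form with $A_r^{(m)}=\beta_{m+rk}\Sigma^\ast$, whose adjoint $A_{r-1}^{(m)\ast}=\beta_{m+(r-1)k}\Sigma$ multiplies $u_{r-1}$, and $B_r^{(m)}=\omega(m+rk)\id+\hmat$. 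The hypotheses of \cref{def:block-jacobi} hold: $B_r^{(m)}$ is bounded and self-adjoint, and $A_r^{(m)}$ has the bounded inverse $\beta_{m+rk}^{-1}(\Sigma^{-1})^\ast$, since $\beta_n>0$ and $\Sigma$ is boundedly invertible. (The statement writes $\aint^\ast$; I take it to denote $\Sigma^\ast$.)

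It follows that $\Hmin=\bigoplus_m V_m\Jmin^{(m)}V_m^{-1}$ as an algebraic direct sum of operators on orthogonal subspaces.

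The only step needing care is passing to closures. I will show that the closure of an orthogonal direct sum of finitely many symmetric operators is the direct sum of their closures. Let $P_m$ be the orthogonal projection onto $\hilbert_m$. Since $P_m$ commutes with $\Hmin$ on its domain, a sequence $\psi_j\to\psi$ with $\Hmin\psi_j\to\varphi$ yields $P_m\psi_j\to P_m\psi$ and $\Hmin P_m\psi_j=P_m\Hmin\psi_j\to P_m\varphi$. Hence the graph of $H$ is the direct sum of the graphs of the closures $\overline{V_m\Jmin^{(m)}V_m^{-1}}=V_mJ^{(m)}V_m^{-1}$. The reverse inclusion is immediate because the sum is finite. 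This proves \cref{eq:decomposition}.
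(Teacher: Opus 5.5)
Your proposal is correct and follows the same route as the paper: you compute the action on $u\otimes\phi_n$, use the invariance of the residue-class subspaces $\hilbert\otimes\mspan(\phi_{m+rk})_{r\in\nnum}$, and read off the block Jacobi coefficients $A^{(m)}_r=\beta_{m+rk}\Sigma^\ast$ and $B^{(m)}_r=\omega(m+rk)\id_\hilbert+\hmat$. The only difference is that you prove explicitly, via the projections $P_m$, that the closure of the finite orthogonal direct sum is the direct sum of the closures, whereas the paper cites a standard reference for this step.
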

\begin{proof}
Given $u\in\hilbert$, a calculation analogous to the one in~\cite[Lemmas~2.5--2.6]{fischer-selfadjointrealizationshigherorder-2026} yields
    \begin{equation}
        H ( u\otimes \phi_n ) = \beta_n  (\Sigma u )\otimes \phi_{n+k}
        + \omega n  u \otimes \phi_n + (\hmat u)\otimes \phi_n  
        + \beta_{n-k} (\Sigma^\ast u)\otimes  \phi_{n-k},
    \end{equation}
    where
    \begin{equation}
        \beta_n = \sqrt{(n+1,k)} =  \sqrt{(n+1)\dots(n+k)} \, ,
    \end{equation}
   with the convention $\phi_{n} = 0$ for $n < 0$.
    Thus, for every $0 \leq m \leq k-1$, $H$ leaves the subspace $\hilbert\otimes\mspan(\phi_{m + rk})_{r \in \nnum}$ invariant, and its restriction to this subspace reads
\begin{equation}
        H (u\otimes \phi_{m + rk}) = ((A^{(m)}_r)^\ast u)\otimes\phi_{m + (r+1)k}  +  (B^{(m)}_r u)\otimes \phi_{m + rk} +  (A^{(m)}_{r-1} u)\otimes \phi_{m + (r-1)k}\,,
    \end{equation}
    where $A_r^{(m)} = \beta_{m + rk} \aint^\ast$ and $B_r^{(m)} = \omega (m + rk) \id_\hilbert + \hmat$. Since $\aint$ has a bounded inverse, so do $\aint^\ast$ and $A_r^{(m)}$.    
    Using the standard unitary equivalence between $\hilbert\otimes\mspan(\phi_{m + rk})_{r \in \nnum}$ and $\ell_0(\nnum,\hilbert)$, this proves that the restriction of $H$ to $\hilbert\otimes\mspan(\phi_{m + rk})_{r \in \nnum}$ is unitarily equivalent to the block Jacobi operator restricted to finitely supported sequences, $J^{(m)}_{\mathrm{min}}$ (\cref{def:block-jacobi}). Therefore, $\Hmin$ is unitarily equivalent to $\bigoplus_{m=0}^{k-1}J^{(m)}_{\mathrm{min}}$, and the claimed property follows by taking closures~\cite[p.~79]{teschl-mathematicalmethodsquantum-2009}.
\end{proof}

As anticipated, the decomposition of $H$ into block Jacobi operators enables us to apply results from \cref{sec:preliminaries}. We begin with the Carleman condition, cf. \cref{prop:carleman}, which will readily enable us to prove self-adjointness of $H$ for $k\leq2$. 
\begin{lemma}
    \label{lem:our-carleman-beta}
	For all $0 \leq m \leq k-1$,
	\begin{equation}
	    \sum_{n = 0}^\infty \frac{1}{\beta_{m + nk}} = \infty
	\end{equation}
	if and only if $k \leq 2$.
\end{lemma}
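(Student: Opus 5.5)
The plan is to compare $\beta_{m+nk}$ with a pure power of $n$ and then reduce the statement to the divergence or convergence of a $p$-series. By definition, $\beta_{m+nk}^2 = (m+nk+1)(m+nk+2)\cdots(m+nk+k)$ is a product of $k$ factors. For $n\geq1$ and $0\leq m\leq k-1$, each factor satisfies
\begin{equation}
    nk \leq m+nk+i \leq (n+2)k \leq 3nk, \qquad 1\leq i\leq k .
\end{equation}
Taking the product of the $k$ factors and then square roots gives the two-sided bound
\begin{equation}
    (nk)^{k/2} \leq \beta_{m+nk} \leq (3nk)^{k/2}, \qquad n\geq 1 .
\end{equation}
The term $n=0$ is finite and positive, since $\beta_m\geq \sqrt{k!}>0$, so it plays no role in convergence. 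Hence $\sum_n \beta_{m+nk}^{-1}$ converges if and only if $\sum_{n\geq1} n^{-k/2}$ converges, by comparison in both directions with constants $(3k)^{-k/2}$ and $k^{-k/2}$.

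The $p$-series $\sum n^{-k/2}$ diverges exactly when $k/2\leq 1$, that is, when $k\leq 2$, and it converges for $k\geq 3$. This gives the claim for every $m$.

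The cases $k=0$ and $k=1$ need a short separate look. If $k=0$, the product is empty, so $\beta\equiv1$ and the series trivially diverges; this is consistent with $k\leq2$. If $k=1$, then $m=0$ and $\beta_n=\sqrt{n+1}$, which is covered by the general bound. There is no real obstacle here; the only care needed is in choosing the uniform constants in the comparison so that they work for all $0\leq m\leq k-1$.
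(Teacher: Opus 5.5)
Your proof is correct and is essentially the paper's argument. The paper writes $\beta_{m+nk}=(nk)^{k/2}\bigl(1+O(1/n)\bigr)$ and compares with $\sum n^{-k/2}$, whereas you use the explicit bounds $(nk)^{k/2}\le\beta_{m+nk}\le(3nk)^{k/2}$ for $n\ge1$, which is slightly cleaner and handles the $n=0$ term explicitly. Your $k=0$ aside is moot, since for $k=0$ no $m$ satisfies $0\le m\le k-1$.
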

\begin{proof}
This follows from a straightforward computation:    \begin{align}
        \beta_{m + nk} & = (m+nk+1,k)^{1/2} = \prod_{l = 1}^k (m+nk+l)^{1/2} \\
        & = (nk)^{k/2} \prod_{l = 1}^k \left(1 + \frac{m+l}{nk}\right)^{1/2} = (nk)^{k/2} \left(1 + O(1/n)\right)\, ,
    \end{align}
    and hence
    \begin{equation}
        \sum_{n = 0}^\infty \frac{1}{\beta_{m + nk}} = \frac{1}{k^{k/2}} \sum_{n = 1}^\infty \frac{1}{n^{k/2} (1+O(1/n))} \, ,
    \end{equation}
    which is finite if and only if $k/2 >1$.
\end{proof}
Above, we adopted the usual big-O notation, cf.~\cite[\S 4]{olver-asymptoticsspecialfunctions-2010}:
given two sequences $(x_n)_{n \in \nnum}$ and $(y_n)_{n \in \nnum}$ of complex numbers, we write $x_n = O(y_n)$ if there exist constants $C> 0$ and $N \in \nnum$ such that $|x_n| \leq C |y_n|$ for all $n \geq N$.
\begin{proposition}
    \label{prop:ess-sa}
	Let $H$ be the operator from \cref{def:op}, and assume that $\Sigma$ has a bounded inverse. Assume $k\leq2$.
	Then $H$ is self-adjoint.
\end{proposition}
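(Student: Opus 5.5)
We reduce the claim to the Carleman criterion applied to each block Jacobi component. By \cref{lem:decomposition}, since $\Sigma$ has a bounded inverse, $H$ is unitarily equivalent to $\bigoplus_{m=0}^{k-1}J^{(m)}$. Here $J^{(m)}$ is the block Jacobi operator with coefficients $A^{(m)}_r=\beta_{m+rk}\Sigma^\ast$ and $B^{(m)}_r=\omega(m+rk)\id_\hilbert+\hmat$. These coefficients satisfy the standing hypotheses of \cref{def:block-jacobi}: each $A^{(m)}_r$ is invertible with bounded inverse $\beta_{m+rk}^{-1}(\Sigma^\ast)^{-1}$, and each $B^{(m)}_r$ is bounded and self-adjoint because $\hmat^\ast=\hmat$ and $\omega\in\rnum$.

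We then check the Carleman condition \eqref{eq:carleman} for each $J^{(m)}$. Since $\beta_n>0$, we have $\norm{A^{(m)}_r}_\infty=\beta_{m+rk}\norm{\Sigma}_\infty$, with $\norm{\Sigma}_\infty>0$ because $\Sigma$ is invertible on a nontrivial space. Hence
\begin{equation}
    \sum_{r=0}^\infty \frac{1}{\norm{A^{(m)}_r}_\infty}=\frac{1}{\norm{\Sigma}_\infty}\sum_{r=0}^\infty\frac{1}{\beta_{m+rk}}=\infty
\end{equation}
for $k\le 2$, by \cref{lem:our-carleman-beta}. The trivial case $k=0$, where $\beta_n=1$, is also covered by that lemma or by direct inspection. \cref{prop:carleman} therefore shows that each $J^{(m)}$ is self-adjoint.

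Finally, a finite orthogonal direct sum of self-adjoint operators is self-adjoint. This holds because the adjoint of a direct sum is the direct sum of the adjoints, and the domain of the sum is the direct sum of the domains. Unitary equivalence preserves self-adjointness, so $H$ is self-adjoint.

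No step here is a real obstacle. The only points to check are that the closure in \cref{lem:decomposition} matches the closure $J=\overline{\Jmin}$ used in \cref{prop:carleman}, which holds by construction, and that the degenerate case $k=0$, if included, is handled consistently.
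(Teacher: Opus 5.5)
Your proposal is correct and follows the paper's proof essentially verbatim: you decompose $H$ into the block Jacobi operators $J^{(m)}$, observe $\norm{A^{(m)}_r}_\infty=\beta_{m+rk}\norm{\Sigma}_\infty$, use the divergence of $\sum_r\beta_{m+rk}^{-1}$ for $k\le 2$ to verify the Carleman condition for each $J^{(m)}$, and conclude via the finite direct sum. One minor correction concerns your aside on $k=0$: there the decomposition (an empty direct sum over $0\le m\le k-1$) and the Carleman lemma do not literally apply, so that case is handled only by direct inspection, for instance because $H$ is then $\id_\hilbert\otimes\omega a^\ast a$ plus the bounded self-adjoint operator $(\hmat+\Sigma+\Sigma^\ast)\otimes\id_{L^2(\rnum)}$.
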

\begin{proof}
	By \cref{lem:decomposition}, $H\simeq\bigoplus_{m=0}^{k-1}J^{(m)}$, with $J^{(m)}$ being a block Jacobi operator with coefficients $A^{(m)}_r,B^{(m)}_r$ from \cref{eq:coefficients1,eq:coefficients2}. In particular, $\norm*{A^{(m)}_n}_\infty = \beta_{m + nk} \norm{\aint^\ast}_\infty$, and \cref{lem:our-carleman-beta} implies
	\begin{equation}
	    \sum_{n = 0}^\infty \frac{1}{\norm*{A^{(m)}_n}_\infty} = \infty \, ;
	\end{equation}
	hence $J^{(m)}$ is self-adjoint by the Carleman condition (\cref{prop:carleman}), and by \cref{lem:decomposition} so is $H$.
\end{proof}

The case $k \geq 3$ will require a more careful analysis. Our strategy is to utilise \cref{thm:swiderski} to prove that the operators $J^{(m)}$, and thus $H$, are not self-adjoint and admit multiple self-adjoint extensions. However, the operators $J^{(m)}$ themselves do not have, in general, strictly positive or negative off-diagonal entries, and thus fail to satisfy the assumptions of \cref{thm:swiderski}. A further unitary transformation will be needed to circumvent this issue. 

This is precisely the point where we will need $\Sigma$ to be \textit{normal}, that is, $\Sigma^\ast\Sigma=\Sigma\Sigma^\ast$ (also cf. \cref{rem:finite-dim-argument}). Consequently, it admits a \textit{polar decomposition}: there exist two commuting operators $U,\pint\in\mathcal{B}(\hilbert)$, with $U$ being unitary and $\pint\geq0$, such that
\begin{equation}\label{eq:polar}
    \Sigma=U\pint;
\end{equation}
moreover, if $\Sigma$ has a bounded inverse, then so does $\pint$~\cite[Theorem 12.35]{rudin-functionalanalysis-2007}.

\begin{lemma}
	\label{lem:trafo}
	Let $\Sigma\in\bounded(\hilbert)$ be a normal operator with bounded inverse, and consider its polar decomposition $\Sigma=U\pint$, with $U$ unitary and $\pint>0$.
	Define the operator $\Uop$ on $\ell^2(\nnum,\hilbert)$ by
	\begin{equation}
	    (\Uop u)_n = (U^\ast)^n u_n \quad \forall u \in \ell^2(\nnum,\hilbert) .
	\end{equation}
Then the following properties hold:
\begin{itemize}
    \item [(i)] $\Uop$ is unitary and leaves $\ell_0(\nnum,\hilbert)$ invariant;
    \item [(ii)] Let $k\geq 3$, $0 \leq m \leq k-1$, and $J^{(m)}$ be the block Jacobi operator from \cref{lem:decomposition}. Define $\tilde{J}^{(m)}: = \mathcal{U} J^{(m)} \mathcal{U}^\ast$.
	Then $\tilde{J}^{(m)}$ is a block Jacobi operator on $\ell^2(\nnum,\hilbert)$ with sequences $(\tilde{A}^{(m)}_n)_{n\in\nnum},(\tilde{B}^{(m)}_n)_{n\in\nnum} \subset\bounded(\hilbert)$ given by
	\begin{align}\label{eq:coefficients1tilde}
		\tilde{A}^{(m)}_n  & = \beta_{m + nk} \pint \, , \\\label{eq:coefficients2tilde}
		\tilde{B}^{(m)}_n & = \omega(m+nk)\id_\hilbert + (U^\ast)^n \hmat U^n  \, .
	\end{align}
	In particular, for all $n$, $\tilde{A}^{(m)}_n$ has bounded inverse and $\tilde{B}^{(m)}_n$ is self-adjoint.
\end{itemize}
\end{lemma}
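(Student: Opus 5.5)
The proof is a direct computation. The only structural input is that normality of $\aint$ makes $U$ and $\pint$ commute, so that the phases $U^{n+1}$ and $(U^\ast)^n$ cancel against the unitary factor of $\aint^\ast$.

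For (i), I would first note that $\Uop$ acts diagonally, applying the unitary $(U^\ast)^n$ in the $n$th slot. Hence $\norm{\Uop u}_{\ell^2(\nnum,\hilbert)}=\norm{u}_{\ell^2(\nnum,\hilbert)}$. The operator defined by $(\Uop^\ast u)_n=U^n u_n$ is a two-sided inverse of $\Uop$, so $\Uop$ is unitary. Both $\Uop$ and $\Uop^\ast$ preserve the support of a sequence, so both leave $\ell_0(\nnum,\hilbert)$ invariant.

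For (ii), I would first work on $\ell_0(\nnum,\hilbert)$. By (i), $\Uop J^{(m)}_{\mathrm{min}}\Uop^\ast$ is an operator with domain $\ell_0(\nnum,\hilbert)$. For $u\in\ell_0$ and $n\geq1$, \cref{lem:decomposition} gives
\begin{equation}
(\Uop J^{(m)}\Uop^\ast u)_n=(U^\ast)^n A^{(m)}_n U^{n+1}u_{n+1}+(U^\ast)^n B^{(m)}_n U^n u_n+(U^\ast)^n (A^{(m)}_{n-1})^\ast U^{n-1}u_{n-1},
\end{equation}
and the analogous formula without the last term holds for $n=0$. I would then substitute the coefficients from \cref{lem:decomposition}. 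Using \cref{eq:polar}, normality gives $\aint^\ast=\pint U^\ast=U^\ast\pint$ and $\aint=U\pint=\pint U$. This yields $A^{(m)}_n=\beta_{m+nk}U^\ast\pint$ and $(A^{(m)}_{n-1})^\ast=\beta_{m+(n-1)k}\pint U$.

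Since $\pint$ commutes with every power of $U$, the first term becomes $\beta_{m+nk}(U^\ast)^{n+1}U^{n+1}\pint\,u_{n+1}=\beta_{m+nk}\pint\,u_{n+1}$. The third term becomes $\beta_{m+(n-1)k}\pint\,u_{n-1}=(\tilde A^{(m)}_{n-1})^\ast u_{n-1}$, because $\pint$ is self-adjoint. The middle term is $\omega(m+nk)u_n+(U^\ast)^n\hmat U^n u_n$. This shows that $\Uop J^{(m)}_{\mathrm{min}}\Uop^\ast=\tilde J^{(m)}_{\mathrm{min}}$, the minimal block Jacobi operator with coefficients \cref{eq:coefficients1tilde,eq:coefficients2tilde}.

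Next I would check the hypotheses of \cref{def:block-jacobi}. Each $\tilde A^{(m)}_n$ has bounded inverse $\beta_{m+nk}^{-1}\pint^{-1}$, since $\beta_{m+nk}>0$ and $\pint$ is invertible by the remark after \cref{eq:polar}. Each $\tilde B^{(m)}_n$ is self-adjoint as a unitary conjugate of $\hmat$ plus a real multiple of the identity.

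Finally, I would pass to closures. Conjugation by a unitary commutes with taking closures, so $\Uop J^{(m)}\Uop^\ast=\Uop\overline{J^{(m)}_{\mathrm{min}}}\Uop^\ast=\overline{\Uop J^{(m)}_{\mathrm{min}}\Uop^\ast}=\overline{\tilde J^{(m)}_{\mathrm{min}}}$, which is the closed block Jacobi operator $\tilde J^{(m)}$.

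There is no real obstacle here. The only delicate points are keeping the order of the noncommuting factors straight, where normality is exactly what allows $\pint$ to be moved past powers of $U$, and the domain bookkeeping on $\ell_0$ before taking closures. The hypothesis $k\geq3$ is not used in the computation; it only reflects the setting in which the lemma will be applied.
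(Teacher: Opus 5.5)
Your proposal is correct and follows the paper's proof essentially step by step. You obtain unitarity of $\Uop$ from the unitarity of each $(U^\ast)^n$, then do the same entrywise computation on $\ell_0(\nnum,\hilbert)$, using that $U$ and $\pint$ commute to cancel the phases; your explicit remark that unitary conjugation commutes with closures simply makes precise the paper's reduction to the action on finitely supported sequences.
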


\begin{proof}

(i): As $U$ is unitary, so is $(U^\ast)^n$ for every $n\in\nnum$, which readily implies unitarity of $\Uop$. Invariance of $\ell_0(\nnum,\hilbert)$ is immediate.

(ii): It suffices to prove that the action of $\tilde{J}^{(m)}$ on $\ell_0(\nnum,\hilbert)$ coincides with the one of the block Jacobi operator with coefficients $\tilde{A}^{(m)}_n$, $\tilde{B}^{(m)}_n$. 

Let $u \in \ell_0(\nnum,\hilbert)$. We have
	\begin{align}
	   		(\tilde{J}^{(m)}u)_n & = (\mathcal{U} J^{(m)} \mathcal{U}^\ast u)_n = (U^\ast)^n (J^{(m)} \mathcal{U}^\ast u)_n \\
		& = (U^\ast)^n A^{(m)}_n U ^{n+1} u_{n+1} + (U^\ast)^n B^{(m)}_n  U^{n} u_n + (U^\ast)^n(A^{(m)}_{n-1})^\ast U^{n-1} u_{n-1} \, . \label{proofeq:tilde-j}
	\end{align}
	We use the polar decomposition $\aint = U \pint$ and $\pint^\ast = \pint$. Since $U$ and $\pint$ commute,
	\begin{align}
		(U^\ast)^n A^{(m)}_n U^{n+1} & = \beta_{m + nk} (U^\ast)^n U^\ast \pint U^{n + 1} = \beta_{m + nk} \pint =\tilde{A}^{(m)}_n\, , \\
		(U^\ast)^n(A^{(m)}_{n-1})^\ast U^{n-1} & = \beta_{m + (n-1)k} (U^\ast)^n \left(U^\ast \pint\right)^\ast U^{n -1} = \beta_{m + (n-1)k} \pint \, =\left(\tilde{A}^{(m)}_{n-1}\right)^\ast,
	\end{align}
and	\begin{equation}
	    (U^\ast)^n B^{(m)}_n  U^n = (U^\ast)^n \left( \omega(m + nk) \id_\hilbert + \hmat \right) U^n = \omega(m + nk) \id_\hilbert + (U^\ast)^n \hmat U^n = \tilde{B}^{(m)}_n\, ,
	\end{equation}
	whence
	\begin{equation}
	    (\tilde{J}^{(m)}u)_n = \tilde{A}^{(m)}_n u_{n+1} + \tilde{B}^{(m)}_n u_n + \left(\tilde{A}^{(m)}_{n-1}\right)^\ast u_{n-1}.
	\end{equation}
     This proves the claimed equality. Clearly $\tilde{B}^{(m)}_n$ is self-adjoint; since $\aint$ has a bounded inverse, so does $\pint$, and thus $\tilde{A}^{(m)}_n$.
\end{proof}
The following lemma shows that the sequences $(\tilde{A}^{(m)}_n)_{n \in \nnum}$ and $(\tilde{B}^{(m)}_n)_{n \in \nnum}$ fulfil assumption (i) of \cref{thm:swiderski}.
\begin{lemma}
    \label{lem:variation}
Let $\Sigma\in\bounded(\hilbert)$ be a normal operator with bounded inverse, and polar decomposition $\Sigma=U\pint$; let $k\geq 3$, $0 \leq m \leq k-1$, and $\tilde{J}^{(m)}$ be the block Jacobi operator with coefficients $\tilde{A}^{(m)}_n$ and $\tilde{B}^{(m)}_n$ from \cref{lem:trafo} (\cref{eq:coefficients1tilde,eq:coefficients2tilde}). Then
	\begin{align}
		\mathcal{V}((\tilde{A}^{(m)}_n)^{-1}) &< \infty \, , \label{proofeq:var1}\\ 		
        \mathcal{V}((\tilde{A}^{(m)}_n)^{-1} \tilde{B}^{(m)}_n) &< \infty \, , \label{proofeq:var2}\\ 
        \mathcal{V}((\tilde{A}^{(m)}_n)^{-1}(\tilde{A}^{(m)}_{n-1})^\ast) &< \infty \label{proofeq:var3}
	\end{align}
    with $\mathcal{V}(\cdot)$ denoting the total variation (\cref{def:variation}).
\end{lemma}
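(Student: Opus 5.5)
We will bound each of the three total variations by computing the differences term by term, using the explicit form of the coefficients in \cref{eq:coefficients1tilde,eq:coefficients2tilde}. Throughout we write $\beta_n=\beta_{m+nk}^{\phantom{}}$ for brevity in the sense $b_n:=\beta_{m+nk}$. The asymptotics $b_n=(nk)^{k/2}(1+O(1/n))$ come from the proof of \cref{lem:our-carleman-beta}. We also need the positivity and monotonicity of $b_n$ in $n$.

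\emph{First variation.} Since $(\tilde{A}^{(m)}_n)^{-1}=b_n^{-1}\pint^{-1}$, we get
\[
\norm{(\tilde{A}^{(m)}_{n+1})^{-1}-(\tilde{A}^{(m)}_n)^{-1}}_\infty=\bigl|b_{n+1}^{-1}-b_n^{-1}\bigr|\,\norm{\pint^{-1}}_\infty .
\]
Because $b_n$ is increasing, the sum over $n$ telescopes to $b_0^{-1}\norm{\pint^{-1}}_\infty<\infty$. This gives \cref{proofeq:var1}.

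\emph{Third variation.} Here $(\tilde{A}^{(m)}_n)^{-1}(\tilde{A}^{(m)}_{n-1})^\ast=(b_{n-1}/b_n)\id$ for $n\ge1$ (at $n=0$ a fixed bounded term, e.g.\ with the convention $A_{-1}=0$, contributes only one finite summand). It therefore suffices to show $\sum_n|b_n/b_{n+1}-b_{n-1}/b_n|<\infty$. Write $b_{n-1}^2/b_n^2=\prod_{l=1}^k\frac{m+(n-1)k+l}{m+nk+l}=\prod_{l}\bigl(1-\frac{k}{m+nk+l}\bigr)$. Each factor is increasing in $n$, so the ratio $b_{n-1}/b_n$ is monotone increasing and tends to $1$. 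Hence the variation again telescopes to at most $1-b_0/b_1<\infty$, giving \cref{proofeq:var3}.

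\emph{Second variation.} This is the main step, because the $\hmat$ part is conjugated by $U^n$ and need not converge. We have
\[
(\tilde{A}^{(m)}_n)^{-1}\tilde{B}^{(m)}_n=\omega\frac{m+nk}{b_n}\pint^{-1}+\frac{1}{b_n}\pint^{-1}(U^\ast)^n\hmat U^n .
\]
For the first term, $c_n=(m+nk)/b_n\sim (nk)^{1-k/2}$. For $k\ge3$ this is eventually monotone decreasing, since $b_n/(m+nk)=\prod_{l\ne \text{first}}\dots$ is a product of increasing positive factors $\sqrt{(m+nk+1)\cdots(m+nk+k)}/(m+nk)$. So $\sum|c_{n+1}-c_n|$ is finite by telescoping; this is the place where $k\ge3$ is used. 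For the second term, we do not attempt any cancellation. We bound crudely:
\[
\Bigl\|\tfrac{1}{b_{n+1}}\pint^{-1}(U^\ast)^{n+1}\hmat U^{n+1}-\tfrac{1}{b_n}\pint^{-1}(U^\ast)^n\hmat U^n\Bigr\|_\infty\le\Bigl(\tfrac1{b_{n+1}}+\tfrac1{b_n}\Bigr)\norm{\pint^{-1}}_\infty\norm{\hmat}_\infty .
\]
This bound is summable by \cref{lem:our-carleman-beta}, since $k\ge3$ gives $\sum_n b_n^{-1}<\infty$. This yields \cref{proofeq:var2}. The only delicate points are checking the monotonicity claims, which one could alternatively replace by the estimates $|c_{n+1}-c_n|=O(n^{-k/2})$ and $|b_n/b_{n+1}-b_{n-1}/b_n|=O(n^{-2})$ obtained from the expansions $b_n=(nk)^{k/2}(1+a/n+O(n^{-2}))$.
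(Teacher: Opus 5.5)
Your proof is correct. Its skeleton is the paper's: everything reduces to scalar sequences in $b_n=\beta_{m+nk}$, and the $\hmat$-part of the second variation is handled in both by the same crude bound $(b_n^{-1}+b_{n+1}^{-1})\norm{\pint^{-1}}_\infty\norm{\hmat}_\infty$. That bound is summable because the Carleman condition fails for $k\geq3$.

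The difference lies in the scalar estimates. The paper bounds the first variation by the triangle inequality together with $\sum_n b_n^{-1}<\infty$. It handles $c_n=(m+nk)/b_n$ and $r_n=b_{n-1}/b_n$ by asymptotic expansions, obtaining $|c_n-c_{n-1}|=O(n^{-k/2})$ and, after a second-order expansion of $\beta$, $|r_n-r_{n-1}|=O(n^{-2})$. You instead observe that $b_n^{-1}$, $r_n$ and (eventually) $c_n$ are bounded monotone sequences, so their variations telescope.

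Your route is more elementary. The identity $r_n^2=\prod_{l=1}^k\bigl(1-\frac{k}{m+nk+l}\bigr)$ removes the need for the second-order expansion. It also shows that the first and third variations are finite for every $k$, and the $\omega$-part of the second for every $k\geq2$, so $k\geq3$ enters only through the $\hmat$-term. The paper's route, in exchange, gives explicit decay rates and needs no monotonicity checks.

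Two small corrections are needed.

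First, your remark that the monotonicity of $c_n$ is ``the place where $k\geq3$ is used'' is inaccurate. For $k=2$, $c_n$ increases to $1$ and still has finite variation.

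Second, your stated reason for that monotonicity does not work as written. With $N=m+nk$, the natural factors $(N+1)/N$ and $(N+2)/N$ of $b_n/N$ are decreasing. A clean justification uses the logarithmic derivative of $f(N)=N^{-2}\prod_{l=1}^k(N+l)=c_n^{-2}$:
\[
\frac{d}{dN}\log f(N)=\sum_{l=1}^k\frac{1}{N+l}-\frac{2}{N}\geq\frac{k}{N+k}-\frac{2}{N}>0\qquad\text{for } N>\frac{2k}{k-2},
\]
so $c_n$ is decreasing for all large $n$. Alternatively, your fallback estimate $|c_{n+1}-c_n|=O(n^{-k/2})$ is exactly the paper's argument.
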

\begin{proof}
The statement is well-posed since, as shown in \cref{lem:trafo}, $\tilde{A}^{(m)}_n$ has a bounded inverse for all $n$. 
We begin with \cref{proofeq:var1}. We have
	\begin{equation}
		\mathcal{V}((\tilde{A}^{(m)}_n)^{-1})  = \sum_{n = 1}^\infty \norm{(\tilde{A}^{(m)}_n)^{-1} - (\tilde{A}^{(m)}_{n-1})^{-1}}_\infty \leq \sum_{n = 1}^\infty \left(\norm{(\tilde{A}^{(m)}_n)^{-1}}_\infty + \norm{ (\tilde{A}^{(m)}_{n-1})^{-1}}_\infty\right) ,
	\end{equation}
and
\begin{equation}
	    \norm{(\tilde{A}_n^{(m)})^{-1}}_\infty = \norm{\beta_{m+nk}^{-1}\pint^{-1}}_\infty = \beta_{m + nk}^{-1} \norm{\pint^{-1}}_\infty \, .
	\end{equation}
	Using these estimates and recalling that the Carleman condition is not met for $k\geq3$ (\cref{lem:our-carleman-beta}), we get
	\begin{equation}
	    \mathcal{V}((\tilde{A}^{(m)}_n)^{-1}) \leq  \sum_{n = 1}^\infty \left(\norm{(\tilde{A}^{(m)}_n)^{-1}}_\infty + \norm{ (\tilde{A}^{(m)}_{n-1})^{-1}}_\infty\right) \leq 2 \norm{\pint^{-1}}_\infty \sum_{n = 0}^\infty \beta_{m + nk}^{-1} < \infty.
	\end{equation}
	To prove \cref{proofeq:var2,proofeq:var3}, we will need the following asymptotic expansion:
    \begin{equation}
	    \label{proofeq:beta-inverse}
	    \beta_{m + nk}^{-1} = (m + nk)^{-k/2} \left( 1 + O(1/n)\right),
	\end{equation}
    which follows directly from the definition of $\beta_n$, cf. \cref{lem:decomposition}:
	\begin{align}
	    \label{proofeq:beta-expansion}
	    \beta_{m+nk} & = \prod_{l = 1}^k (m + nk + l)^{1/2} = (m + nk)^{k/2} \prod_{l = 1}^k \left(1 + \frac{l}{m + nk}\right)^{1/2} \\
					& = (m + nk)^{k/2} \left( 1 + \frac{c_k}{n} + O(1/n^2) \right) \,  ,
	\end{align}
	with some $c_k \in \rnum$.	

	We begin with \cref{proofeq:var2}. We have 
	\begin{equation}
		(\tilde{A}^{(m)}_n)^{-1} \tilde{B}^{(m)}_n = \frac{\omega(m+nk)}{\beta_{m+nk}}\pint^{-1} + \beta_{m+nk}^{-1}\pint^{-1} (U^\ast)^n \hmat U^n \, .
	\end{equation}
	By using the asymptotic expansion \eqref{proofeq:beta-inverse},
	\begin{align}
	    \label{proofeq:var-ab-expansion}
		& \norm{(\tilde{A}^{(m)}_n)^{-1} \tilde{B}^{(m)}_n - (\tilde{A}^{(m)}_{n-1})^{-1} \tilde{B}^{(m)}_{n-1}}_\infty  \leq \left| \frac{\omega(m+nk)}{\beta_{m+nk}} - \frac{\omega(m+(n-1)k)}{\beta_{m+(n-1)k}} \right| \norm{\pint^{-1}}_\infty \\
		& \qquad+ \norm{\beta_{m+nk}^{-1}\pint^{-1} (U^\ast)^n \hmat U^n - \beta_{m+(n-1)k}^{-1}\pint^{-1} (U^\ast)^{n-1} \hmat U^{n-1}}_\infty \\
	  &\quad  \leq \omega \left| (m + nk)^{1-k/2}\left(1 + O(1/n)\right) - (m + (n-1)k)^{1-k/2}\left(1 + O(1/n)\right)\right| \norm{\pint^{-1}}_\infty \\
		& \qquad + 2 \beta_{m+(n-1)k}^{-1} \norm{\pint^{-1}}_\infty\norm{\hmat}_\infty \, ,
	\end{align}
	where we used $\beta_{m + nk}^{-1} \leq \beta_{m + (n-1)k}^{-1}$.
	Moreover,
	\begin{equation}
	    (m + (n-1)k)^{1-k/2} = (m + nk)^{1-k/2} \left(1 - \frac{k}{m + nk}\right)^{1-k/2} = (m + nk)^{1-k/2} \left( 1+ O(1/n)\right)
	\end{equation}
	and hence
	\begin{multline}
	    (m + nk)^{1-k/2}\left(1 + O(1/n)\right) - (m + (n-1)k)^{1-k/2}\left(1 + O(1/n)\right)  \\
					= (m + nk)^{1-k/2}\left( 1  - \left( 1 + O(1/n)\right)\left(1 + O(1/n)\right)\right) \\
					= (m + nk)^{1-k/2} O(1/n) = O(n^{-k/2}) \, .
	\end{multline}
	Combining this with \cref{proofeq:var-ab-expansion}, we get
	\begin{equation}
	    \norm{(\tilde{A}^{(m)}_n)^{-1} \tilde{B}^{(m)}_n - (\tilde{A}^{(m)}_{n-1})^{-1} \tilde{B}^{(m)}_{n-1}}_\infty \leq \omega \norm{\pint^{-1}}_\infty O(n^{-k/2}) + 2 \beta_{m+(n-1)k}^{-1} \norm{\pint^{-1}}_\infty\norm{\hmat}_\infty\, ,
	\end{equation}
	and, again using the failure of the Carleman condition for $k\geq3$ (\cref{lem:our-carleman-beta}), we obtain
	\begin{align}
		\mathcal{V}((\tilde{A}^{(m)}_n)^{-1} \tilde{B}^{(m)}_n) & = \sum_{n = 1}^\infty \norm{(\tilde{A}^{(m)}_n)^{-1} \tilde{B}^{(m)}_n - (\tilde{A}^{(m)}_{n-1})^{-1} \tilde{B}^{(m)}_{n-1}}_\infty \\
		& \leq \omega \norm{\pint^{-1}}_\infty \sum_{n = 1}^\infty O(n^{-k/2}) + 2 \norm{\pint^{-1}}_\infty\norm{\hmat}_\infty \sum_{n = 0}^\infty \beta_{m + nk}^{-1} < \infty \, .
    \end{align}

We finally prove \cref{proofeq:var3}. As $\pint^\ast = \pint$,
	\begin{equation}
	    \label{proofeq:beta-ratio}
		(\tilde{A}^{(m)}_n)^{-1}(\tilde{A}^{(m)}_{n-1})^\ast  = \frac{\beta_{m + (n-1)k}}{\beta_{m+nk}} \pint^{-1} \pint^\ast = \frac{\beta_{m + (n-1)k}}{\beta_{m+nk}} \id_\hilbert \, ,
    \end{equation}
    and using \cref{proofeq:beta-expansion}, we get
    \begin{align}
        \frac{\beta_{m + (n-1)k}}{\beta_{m+nk}} & = \left(\frac{m + (n-1)k}{m + nk}\right)^{k/2}\frac{ 1 + c_k/(n-1) + O(1/n^2)}{1+c_k/n + O(1/n^2)} \\
        & = \left(1 - \frac{k}{m + nk}\right)^{k/2} \left(1 + \frac{c_k}{n-1} - \frac{c_k}{n} + O(1/n^2)\right) \\
        & = \left(1 - \frac{k^2}{2 (m + nk)} + O(1/n^2)\right)\left( 1+ O(1/n^2)\right) = 1 - \frac{k^2}{2 (m + nk)} + O(1/n^2)
    \end{align}
    and hence
    \begin{equation}
        \frac{\beta_{m + (n-1)k}}{\beta_{m+nk}} - \frac{\beta_{m + (n-2)k}}{\beta_{m+(n-1)k}} = -\frac{k^2}{2 (m + nk)} + \frac{k^2}{2 (m + (n-1)k)} + O(1/n^2) = O(1/n^2) \, ,
    \end{equation}
    whence, using \cref{proofeq:beta-ratio}, 
    \begin{equation}
        \mathcal{V}((\tilde{A}^{(m)}_n)^{-1}(\tilde{A}^{(m)}_{n-1})^\ast) = \sum_{n = 1}^\infty \left|\frac{\beta_{m + (n-1)k}}{\beta_{m+nk}}-\frac{\beta_{m + (n-2)k}}{\beta_{m+(n-1)k}}\right| < \infty\,,
    \end{equation}
    which concludes the proof.
\end{proof}

With these results at hand, we can finally prove that each operator $\tilde{J}^{(m)}$ is maximally indeterminate, i.e. its deficiency subspaces are unitarily equivalent to $\hilbert$, by applying \cref{thm:swiderski}.

\begin{proposition}
    \label{prop:deficiency}
  Let $\Sigma\in\bounded(\hilbert)$ be a normal operator with bounded inverse, and polar decomposition $\Sigma=U\Pi$; let $k\geq 3$, $0 \leq m \leq k-1$, and $\tilde{J}^{(m)}$ be the block Jacobi operator with coefficients $\tilde{A}^{(m)}_n$ and $\tilde{B}^{(m)}_n$ from \cref{lem:trafo}.
    Then $\tilde{J}^{(m)}$ is maximally indeterminate, that is,
    \begin{equation}
        \Ker\left(\left( \tilde{J}^{(m)}\right)^\ast - z \right)  \cong \hilbert \, 
    \end{equation}
    for all $z\in\cnum$.
\end{proposition}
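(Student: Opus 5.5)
We check the four hypotheses of \cref{thm:swiderski} for the coefficients $\tilde{A}^{(m)}_n=\beta_{m+nk}\pint$ and $\tilde{B}^{(m)}_n=\omega(m+nk)\id_\hilbert+(U^\ast)^n\hmat U^n$ from \cref{lem:trafo}. Once they hold, \cref{thm:swiderski} gives $\dspace(z)\cong\hilbert$ for all $z\in\cnum$, i.e.\ $\Ker((\tilde{J}^{(m)})^\ast-z)\cong\hilbert$.

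Hypothesis (i) is exactly \cref{lem:variation}. Hypothesis (ii) follows from $\norm{\tilde{A}^{(m)}_n}_\infty=\beta_{m+nk}\norm{\pint}_\infty$ (with $\norm{\pint}_\infty>0$, since $\pint$ is invertible) together with \cref{lem:our-carleman-beta} for $k\ge3$.

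For (iii) we compute the four limits.
\begin{itemize}
\item $T=\lim\beta_{m+nk}^{-1}\pint^{-1}=0$.
\item By \cref{proofeq:beta-ratio} and the expansion in the proof of \cref{lem:variation}, $R=\lim\frac{\beta_{m+(n-1)k}}{\beta_{m+nk}}\id_\hilbert=\id_\hilbert$.
\item For $Q$, write $(\tilde{A}^{(m)}_n)^{-1}\tilde{B}^{(m)}_n=\frac{\omega(m+nk)}{\beta_{m+nk}}\pint^{-1}+\beta_{m+nk}^{-1}\pint^{-1}(U^\ast)^n\hmat U^n$. The first term is $O(n^{1-k/2})\to0$ because $k\ge3$. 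The second has norm at most $\beta_{m+nk}^{-1}\norm{\pint^{-1}}_\infty\norm{\hmat}_\infty\to0$, and this holds even though $(U^\ast)^n\hmat U^n$ need not converge, since it is uniformly bounded. Hence $Q=0$.
\item $C=\lim\norm{\tilde{A}^{(m)}_n}_\infty^{-1}\tilde{A}^{(m)}_n=\pint/\norm{\pint}_\infty$. This sequence is constant in $n$, positive, and invertible.
\end{itemize}
This is exactly where the unitary $\Uop$ was needed. Without it, the $n$-dependent phases would not enter the limit $C$, but they would spoil $R$: the quantity $A_n^{-1}A_{n-1}^\ast$ would involve $(\Sigma^\ast)^{-1}\Sigma$ rather than $\id_\hilbert$. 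With it, the rotating factor survives only inside the $\hmat$-term, which is suppressed by $\beta_{m+nk}^{-1}$.

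For (iv), substitute $T=Q=0$, $R=\id_\hilbert$, and write $C=c\pint$ with $c=\norm{\pint}_\infty^{-1}$. Then
\begin{equation}
\begin{pmatrix} 0 & -C\\ C & 0\end{pmatrix}\begin{pmatrix}0&\id\\-\id&0\end{pmatrix}=\begin{pmatrix} C & 0\\ 0 & C\end{pmatrix}.
\end{equation}
Its real part is $\operatorname{diag}(C,C)$, which is independent of $z$. Since $\pint>0$ has a bounded inverse, $\pint\ge\norm{\pint^{-1}}_\infty^{-1}\id$. So the quadratic form of $\mathcal{F}(z)$ is bounded below by a positive multiple of the identity for every $z$, and in particular it is strictly positive. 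All hypotheses of \cref{thm:swiderski} hold, and the claim follows.

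The main obstacle is (iv), specifically checking that the sign and ordering conventions in $\mathcal{F}(z)$ give $\operatorname{diag}(C,C)$ rather than an indefinite block operator. The fallback is that even with the other sign the result $-\operatorname{diag}(C,C)$ is strictly negative, which the theorem also allows. Positivity of $C$, guaranteed by normality via the polar decomposition with commuting factors, is what makes this step work.
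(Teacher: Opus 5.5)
Your proposal is correct and follows the paper's proof essentially step for step. You verify hypotheses (i)--(iv) of \cref{thm:swiderski} via \cref{lem:variation} and \cref{lem:our-carleman-beta}, obtain the same limits $T=Q=0$, $R=\id_\hilbert$, $C=\pint/\norm{\pint}_\infty$, and conclude from $\mathcal{F}(z)=\operatorname{diag}(C,C)>0$; your additional remarks (uniform boundedness of $(U^\ast)^n\hmat U^n$, and $\pint\geq\norm{\pint^{-1}}_\infty^{-1}\id$) simply make explicit details the paper leaves implicit.
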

\begin{proof}
	We will check that conditions (i)--(iv) in \cref{thm:swiderski} are satisfied by $\tilde{J}^{(m)}$. Recall that the coefficients $\tilde{A}^{(m)}_n,\tilde{B}^{(m)}_n$ are given by \cref{eq:coefficients1tilde,eq:coefficients2tilde}. Condition (i) is precisely \cref{lem:variation}; besides, condition (ii) follows directly from \cref{lem:our-carleman-beta}:
	\begin{equation}
	    \sum_{n = 0}^\infty \frac{1}{\norm{\tilde{A}^{(m)}_n}_\infty} = \frac{1}{\norm{\pint}_\infty} \sum_{n = 0}^\infty \beta_{m + nk}^{-1} < \infty.
	\end{equation}
    We check condition (iii). We have
	\begin{align}
		\lim_{n \to \infty}\norm{(\tilde{A}^{(m)}_n)^{-1}}_\infty   = \lim_{n \to \infty} \beta_{m + nk}^{-1} \norm{\pint^{-1}}_\infty &= 0 \,, \\
		\lim_{n \to \infty} \norm{(\tilde{A}^{(m)}_n)^{-1} \tilde{B}^{(m)}_n}_\infty  \leq \lim_{n \to \infty} \omega\frac{m+nk}{\beta_{m+nk}} \norm{\pint^{-1}}_\infty + \lim_{n \to \infty} \beta_{m + nk}^{-1} \norm{\pint^{-1}}_\infty \norm{\hmat}_\infty & = 0 \, ,\\
		\lim_{n \to \infty} \norm{(\tilde{A}^{(m)}_n)^{-1} (\tilde{A}^{(m)}_{n-1})^\ast - \id_\hilbert}_\infty  = \lim_{n \to \infty} \left| \frac{\beta_{m+(n-1)k}}{\beta_{m+nk}}-1\right|\norm{\id_\hilbert}_\infty & = 0 \,, \\
		\lim_{n \to \infty} \norm{ \tilde{A}^{(m)}_n \norm{\tilde{A}^{(m)}_n}^{-1} - \pint \norm{\pint}^{-1}_\infty}_\infty = \lim_{n \to \infty} \norm{\pint}^{-1} \norm{\pint - \pint}_\infty & = 0 \, ,
\end{align}
therefore condition (iii) is fulfilled with
\begin{equation}
    T = Q  = 0, \qquad R = \id_\hilbert,\qquad  C = \pint \norm{\pint}^{-1}_\infty,
\end{equation}
and the operator $\mathcal{F}(z)$ from \cref{thm:swiderski} reduces to
\begin{equation}
	\mathcal{F}(z) = \Real \left( \begin{pmatrix}
		0 & -C \\ C & 0
	\end{pmatrix}
	\begin{pmatrix}
	 0 & \id_\hilbert \\ -\id_\hilbert &0
	\end{pmatrix}
	\right) =\Real \begin{pmatrix}
		C & 0 \\ 0 & C
\end{pmatrix} = \begin{pmatrix}
		C & 0 \\ 0 & C
\end{pmatrix} \, .
\end{equation}
As $\pint$ is strictly positive, so is $\mathcal{F}(z)$, and thus condition (iv) is also satisfied. Therefore, \cref{thm:swiderski} applies to $\tilde{J}^{(m)}$, which proves the claim.
\end{proof}

We can finally state our first main result concerning the class of operators from \cref{def:op}:
\begin{theorem}
    \label{thm:main-result}
	Let $H$ be the operator from \cref{def:op}; assume that $\Sigma$ is normal with bounded inverse, and $k\geq3$. Then $H$ is not self-adjoint, and its deficiency spaces $\dspace(\pm\iu)=\Ker(H^\ast\pm\iu)$ are given by
	\begin{equation}\label{eq:deficiency}
	    \dspace(\pm \iu) = \bigoplus_{m = 0}^{k-1} \left\{(P^{(m)}_n(\mp \iu) x)_{n \in \nnum} \, : \, x \in \hilbert \right\}  \cong \bigoplus_{m = 0}^{k-1} \hilbert  \, ,
	\end{equation}
    with $P^{(m)}_n(z)$ being the orthogonal polynomials of first kind of $J^{(m)}$, cf. \cref{lem:decomposition,def:orthogonal-polynomials}.    
    Moreover, all self-adjoint extensions $H_U$ of $H$ are parametrised by unitary operators $V: \dspace(+\iu) \to \dspace(-\iu)$.
\end{theorem}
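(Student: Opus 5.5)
The argument assembles the preceding results in three steps. First, by \cref{lem:decomposition}, $H$ is unitarily equivalent to $\bigoplus_{m=0}^{k-1}J^{(m)}$. The unitary implementing this is the identification of $\hilbert\otimes\ell^2(\nnum)$ with $\bigoplus_m \ell^2(\nnum,\hilbert)$, obtained by sorting Fock states by residue class mod $k$. For a finite orthogonal direct sum of closed symmetric operators, the adjoint is the direct sum of the adjoints. Hence
\begin{equation}
\Ker(H^\ast\pm\iu)\cong\bigoplus_{m=0}^{k-1}\Ker\bigl((J^{(m)})^\ast\pm\iu\bigr),
\end{equation}
so it suffices to identify each summand.

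Second, for fixed $m$, \cref{lem:trafo} gives $\tilde J^{(m)}=\Uop J^{(m)}\Uop^\ast$ with $\Uop$ unitary. Therefore
\begin{equation}
\Ker\bigl((J^{(m)})^\ast-z\bigr)=\Uop^\ast\Ker\bigl((\tilde J^{(m)})^\ast-z\bigr),
\end{equation}
and by \cref{prop:deficiency} the right-hand side is $\cong\hilbert$ for every $z$, in particular $z=\mp\iu$. So $J^{(m)}$ is itself maximally indeterminate. To get the explicit description, I invoke \cref{lem:deficiency-spaces} for $J^{(m)}$. It gives $\dspace^{(m)}(\pm\iu)=\{(P^{(m)}_n(\mp\iu)x)_n : x\in\hilbert,\ \sum_n\|P^{(m)}_n(\mp\iu)x\|^2<\infty\}$, using $\dspace(z)=\Ker(J^\ast-z^\ast)$ with $z=\pm\iu$ so that $z^\ast=\mp\iu$.

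The one point needing care is dropping the summability restriction, i.e.\ showing that every $x\in\hilbert$ is admissible. I would check that the map $x\mapsto (P^{(m)}_n(\mp\iu)x)_n$ is a bijection onto $\Ker((J^{(m)})^\ast\mp\iu)$. Injectivity holds since $P_0=\id$. For surjectivity, any element of the kernel solves $\Jop u=z u$ including the initial condition \cref{eq:generalized-eigenvalue-initial}, so $u_1=A_0^{-1}(z-B_0)u_0$ and hence $u=P(z)u_0$. Conversely, to see that all $x$ give square-summable sequences, I use that \cref{thm:swiderski} asserts all solutions of \cref{eq:generalized-eigenvalue} for $\tilde J^{(m)}$ are square-summable; this is the meaning of maximal indeterminacy. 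Conjugating by $\Uop$ preserves norms componentwise, since $\Uop$ acts by unitaries at each index. Thus $\Uop^\ast$ maps solutions for $\tilde J^{(m)}$ to solutions for $J^{(m)}$ and preserves square-summability. In particular $P^{(m)}(\mp\iu)x\in\ell^2$ for all $x$. This yields \cref{eq:deficiency} together with the isomorphism to $\bigoplus_m\hilbert$.

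Third, since $\hilbert\neq\{0\}$, the deficiency spaces are nontrivial, so $H$ is not self-adjoint. Both deficiency spaces are isomorphic to $\bigoplus_{m=0}^{k-1}\hilbert$, so they have equal dimension. Von Neumann's extension theory then says the self-adjoint extensions of the closed symmetric operator $H$ are in bijection with unitaries $V:\dspace(+\iu)\to\dspace(-\iu)$, via $\domain(H_V)=\{\psi+\xi+V\xi\}$. I expect no real obstacle here; the only subtle step is the justification above that Swiderski's conclusion covers all $x$ and transfers through $\Uop$.
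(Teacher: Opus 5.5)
Your proposal is correct and follows essentially the same route as the paper. The steps match: the decomposition into the $J^{(m)}$, conjugation by $\Uop$ to $\tilde J^{(m)}$, maximal indeterminacy via \cref{prop:deficiency}, the explicit description of the deficiency spaces from \cref{lem:deficiency-spaces}, and von Neumann's theory for the parametrisation. Your extra check is that $\Uop$ intertwines the formal recurrences componentwise on all of $\ell(\nnum,\hilbert)$ while preserving $\|u_n\|$, so that square-summability of every solution (hence admissibility of every $x\in\hilbert$) transfers from $\tilde J^{(m)}$ to $J^{(m)}$. This only makes explicit what the paper compresses into ``as $\Uop$ is unitary and leaves $\ell_0(\nnum,\hilbert)$ invariant, $J^{(m)}$ is also maximally indeterminate''.
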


\begin{proof}
    By \cref{lem:decomposition}, we can write
    \begin{equation}\label{eq:decomposition2}
        	H \simeq \bigoplus_{m = 0}^{k-1} J^{(m)} \, ,
    \end{equation}
    each $J^{(m)}$ being a block Jacobi operator with coefficients given by \cref{eq:coefficients1,eq:coefficients2}; by \cref{lem:trafo}, $J^{(m)} = \mathcal{U}^\ast \tilde{J}^{(m)} \mathcal{U}$, with $\tilde{J}^{(m)}$ being a block Jacobi operator with coefficients given by \cref{eq:coefficients1tilde,eq:coefficients2tilde} and $\mathcal{U}$ being the unitary operator from \cref{lem:trafo}. By \cref{prop:deficiency}, $\tilde{J}^{(m)}$ are maximally indeterminate; as $\mathcal{U}$ is unitary and leaves $\ell_0(\nnum, \hilbert)$ invariant, $J^{(m)}$ is also maximally indeterminate, and thus by \cref{lem:deficiency-spaces} its deficiency spaces are  given by
    \begin{equation}
        \dspace(\pm \iu ) = \Ker\left(\left(J^{(m)}\right)^\ast \pm \iu \right) =  \left\{(P^{(m)}_n(\mp \iu) x)_{n \in \nnum} \, : \, x \in \hilbert \right\} \cong \hilbert\, ;
    \end{equation}
together with \cref{eq:decomposition2}, this readily implies \cref{eq:deficiency}. The final claim about parametrisation of self-adjoint extensions follows from the standard theory of self-adjoint extensions, see e.g.~\cite[Theorem 13.10]{schmudgen-unboundedselfadjointoperators-2012}.
\end{proof}

This proves that all operators implementing $k$-photon matter--boson interactions as in \cref{def:op} admit multiple self-adjoint extensions for $k\geq3$, provided that the operator $\Sigma$ is normal and has a bounded inverse. 

In the case of finite-dimensional matter systems, $\dim\hilbert=d<\infty$, one can also recover some information about the spectrum. Hereafter, we use the following nomenclature: the discrete spectrum of an operator is the set of its isolated eigenvalues with finite multiplicity.

\begin{corollary}
    \label{cor:main-result-finite-dim}
	Under the same assumptions and notation of \cref{thm:main-result}, assume $\dim\hilbert = d<\infty$. 
    Then the spectrum of all self-adjoint extensions $H_U$ of $H$ is purely discrete.
\end{corollary}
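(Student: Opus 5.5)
Since $\dim\hilbert = d < \infty$, \cref{thm:main-result} gives $\dim\dspace(\pm\iu) = kd < \infty$. The plan is to show that the resolvent of one fixed self-adjoint extension is compact, and then transfer this to every other extension by a finite-rank perturbation argument. The key input is the classical fact that a symmetric operator with finite deficiency indices in which \emph{all} solutions of the eigenvalue equation are square-summable (the maximally indeterminate, or limit-circle, situation) has self-adjoint extensions with compact resolvent. Maximal indeterminacy is exactly what \cref{prop:deficiency} and \cref{thm:main-result} provide for each $J^{(m)}$.

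First, reduce to a single block. By \cref{lem:decomposition}, $H \simeq \bigoplus_{m=0}^{k-1} J^{(m)}$. Choose for each $m$ some self-adjoint extension $J^{(m)}_0$ of $J^{(m)}$; then $H_0 := \bigoplus_m J^{(m)}_0$ is a self-adjoint extension of $H$. Any two self-adjoint extensions $H_U$, $H_0$ of $H$ agree on $\domain(H)$, and $\domain(H)$ has finite codimension in each of their domains. Hence, by the Krein resolvent formula (equivalently, von Neumann's formula), $(H_U - \iu)^{-1} - (H_0 - \iu)^{-1}$ has rank at most $kd$. Since compactness of the resolvent is stable under finite-rank perturbations, it suffices to prove that each $(J^{(m)}_0 - \iu)^{-1}$ is compact. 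This yields a compact resolvent for $H_U$, and therefore purely discrete spectrum.

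Second, prove compactness for one block; this is the main step. I would show that the resolvent of $J^{(m)}_0$ is Hilbert--Schmidt, using the orthogonal polynomials. Let $P_n = P^{(m)}_n(z)$ and $Q_n = Q^{(m)}_n(z)$ from \cref{def:orthogonal-polynomials}, with $z = \iu$. Maximal indeterminacy gives $\sum_n \norm{P_n x}^2 < \infty$ for all $x \in \hilbert$. Since $d < \infty$, this implies $\sum_n \norm{P_n}_{HS}^2 < \infty$. Every solution of \cref{eq:generalized-eigenvalue} is a combination $P_n a + Q_n b$, so the same square-summability holds for $Q$. A self-adjoint extension is fixed by a boundary condition at infinity expressed through the Wronskian $W(u,v) = \lim_n \bigl(v_n^\ast A_n u_{n+1} - v_{n+1}^\ast A_n^\ast u_n\bigr)$, which exists for all square-summable solutions. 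Its resolvent then has the Green-function kernel
\begin{equation}
  G(n,l) = P_{\min(n,l)}\,\Theta\,\Psi_{\max(n,l)}^\ast,
\end{equation}
where $\Psi = Q + P M$ is the solution satisfying the boundary condition at infinity, and $M$ and $\Theta$ are fixed $d\times d$ matrices. It follows that $\sum_{n,l}\norm{G(n,l)}_{HS}^2 \le C \sum_n \norm{P_n}^2 \sum_l \norm{\Psi_l}^2 < \infty$, so the resolvent is Hilbert--Schmidt.

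I expect the main obstacle to be writing down this Green function for block Jacobi operators correctly. This includes checking the constancy of the matrix Wronskian, the invertibility of $\Theta$, and the fact that the boundary condition does single out a self-adjoint extension. Two alternatives avoid this work. One is to cite the block analogue of the classical result that, in the completely indeterminate case, all self-adjoint extensions have discrete spectrum; see e.g.\ \cite{budyka-deficiencyindicesdiscreteness-2022a} or \cite[Ch.~VII]{berezanskii-expansioneigenfunctionsselfadjoint-1968}. The other is to reduce to the scalar case: treat $J^{(m)}$ in an orthonormal basis of $\hilbert$ as a $(2d+1)$-banded scalar matrix, and use the scalar Hamburger-indeterminate theory, in which every self-adjoint extension has compact resolvent.
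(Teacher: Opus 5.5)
Your proposal is correct and follows the same route as the paper. Both arguments run as follows: one self-adjoint extension of each $J^{(m)}$ has discrete spectrum by maximal indeterminacy with $d<\infty$; the direct-sum extension of $H$ then has discrete spectrum; and this transfers to every extension because any two resolvents differ by an operator of rank at most $kd$, which is exactly the mechanism behind the corollary from Schm\"udgen's book that the paper invokes. The paper simply cites Budyka--Malamud for the block-level step, while your main route reproves it through the Hilbert--Schmidt Green kernel. In the block case the entries with $n>l$ have the form $\Psi_n\Theta' P_l^\ast$ rather than $P_l\Theta\Psi_n^\ast$, but this does not affect your estimate.

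Discard your last alternative. A banded scalar matrix of bandwidth larger than one is not a Jacobi matrix, so the scalar Hamburger theory does not apply to it. A correct scalar reduction would instead use the eigenbasis of $\pint$ to split $\tilde{J}^{(m)}$ into $d$ indeterminate scalar Jacobi matrices plus the bounded block-diagonal term $\bigoplus_n (U^\ast)^n\hmat U^n$, which does not affect compactness of the resolvent.
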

\begin{proof}
   If $\dim\hilbert=d<\infty$, the spectrum of every self-adjoint extension of a maximally indeterminate block Jacobi operator on $\ell^2(\nnum,\hilbert)$ is purely discrete~\cite{budyka-selfadjointnessdiscretenessspectrum-2020}. Therefore, all self-adjoint extensions of each operator $J^{(m)}$ have discrete spectrum. 
   
    For $0 \leq m \leq k-1$, let $J_1^{(m)}$ be any self-adjoint extension of $J^{(m)}$. Then $H_1 = \bigoplus_{m = 0}^{k-1} J_1^{(m)}$ is a self-adjoint extension of $H$ with purely discrete spectrum.
    But by~\cite[Corollary 8.13]{schmudgen-unboundedselfadjointoperators-2012}, if an operator with finite deficiency indices admits one self-adjoint extension with discrete spectrum, \textit{all} its self-adjoint extensions have discrete spectrum. Therefore, the spectra of all self-adjoint extensions $H_U$ of $H$ are discrete.
\end{proof}

\subsection{Case 2: $\Sigma$ without bounded inverse}

The results obtained in the previous section (\cref{prop:ess-sa} for $k\leq2$ and \cref{thm:main-result,cor:main-result-finite-dim} for $k\geq3$) require the operator $\Sigma\in\bounded(\hilbert)$ in the expression of $H$ mediating the light--matter interaction (\cref{def:op}) to be normal and have bounded inverse. While the former assumption cannot be waived (cf. \cref{ex:kjc}), we now show how to extend these results to the case where the latter assumption fails (cf. \cref{ex:kdicke}).

\begin{proposition}
    \label{prop:main-result-inverse}
    Let $H$ be the operator from \cref{def:op}, assume that $\aint$ is normal, and $\Sigma\neq0$.
    Moreover, set  $\hilbert_0 = \Ker \Sigma$, and $\hilbert_1 = \hilbert_0^\perp$, and assume that the restriction $\Sigma_1 = \Sigma_{| \hilbert_1}$ of $\Sigma$ to $\hilbert_1$ has a bounded inverse. 
	Then the following statements hold:
    \begin{itemize}
        \item[(i)] If $k\leq2$, $H$ is self-adjoint;
        \item[(ii)] If $k\geq3$, then $H$ is not self-adjoint and its deficiency spaces are given by
    \begin{equation}
        \dspace(\pm \iu) \cong \bigoplus_{m = 0}^{k-1}\hilbert_1 \, .
    \end{equation}
    In particular, if $\dim\hilbert_1=d'<\infty$, then $H$ has deficiency indices $(kd',kd')$ and its self-adjoint extensions are parametrised by $kd'\times kd'$ unitary matrices. If, in addition, $\dim\hilbert<\infty$, then every such extension has purely discrete spectrum. In finite dimensions, these deficiency indices also follow from an elementary direct argument, presented in \cref{rem:finite-dim-argument}.
    \end{itemize}
\end{proposition}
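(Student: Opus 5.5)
The plan is to reduce to Case 1 by splitting the matter space along $\hilbert = \hilbert_0 \oplus \hilbert_1$. Since $\Sigma$ is normal, $\Ker\Sigma = \Ker\Sigma^\ast$, so $\Ran\Sigma \subset (\Ker\Sigma^\ast)^\perp = \hilbert_1$. Hence both $\Sigma$ and $\Sigma^\ast$ leave $\hilbert_1$ invariant and vanish on $\hilbert_0$. With respect to this splitting,
\begin{equation}
\Sigma = 0 \oplus \Sigma_1 \, , \qquad \hmat = \begin{pmatrix} H_{00} & H_{01} \\ H_{10} & H_{11} \end{pmatrix} \, ,
\end{equation}
where $\Sigma_1$ is normal on $\hilbert_1$ with bounded inverse.

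The difficulty is the off-diagonal part of $\hmat$, which couples the two sectors. I will treat it as a bounded perturbation. Write
\begin{equation}
\Hmin = \Hmin^{(0)} + K \, , \qquad K = \begin{pmatrix} 0 & H_{01} \\ H_{10} & 0 \end{pmatrix} \otimes \id_{L^2(\rnum)} \, ,
\end{equation}
where $K$ is bounded and self-adjoint. The unperturbed operator splits as
\begin{equation}
\Hmin^{(0)} = \bigl(\id_{\hilbert_0} \otimes \omega a^\ast a + H_{00} \otimes \id\bigr) \oplus \bigl(\id_{\hilbert_1} \otimes \omega a^\ast a + H_{11} \otimes \id + \Sigma_1 \otimes (a^\ast)^k + \Sigma_1^\ast \otimes a^k\bigr) \, ,
\end{equation}
both on the natural domains $\hilbert_j \otimes \domain_0$. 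Two standard facts then carry everything through:
\begin{itemize}
\item adding a bounded symmetric operator preserves the closure domain, essential self-adjointness, and the deficiency indices (more precisely, the dimensions of the deficiency spaces; cf. the stability of deficiency indices under relatively bounded perturbations with bound $<1$, or simply Kato--Rellich type arguments);
\item for a direct sum, the closure, adjoint and deficiency spaces are the direct sums of the respective objects.
\end{itemize}

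The first summand is the closure of $\omega(\id \otimes a^\ast a) + H_{00} \otimes \id$. This is a bounded perturbation of the essentially self-adjoint operator $\id_{\hilbert_0} \otimes \omega a^\ast a$ on $\hilbert_0 \otimes \domain_0$, which is diagonal in the Fock basis. So it is self-adjoint with zero deficiency. If $\hilbert_0 = \{0\}$ the summand is simply absent.

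The second summand is exactly the operator of \cref{def:op} on $\hilbert_1$ with a normal coupling $\Sigma_1$ that has bounded inverse. For $k \le 2$, \cref{prop:ess-sa} applies and gives self-adjointness. For $k \ge 3$, \cref{thm:main-result} gives deficiency spaces $\cong \bigoplus_{m=0}^{k-1} \hilbert_1$. Adding $K$ back then yields (i) and the formula in (ii).

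The main point to check is the invariance of the deficiency spaces, up to isomorphism, under the bounded perturbation $K$ in the infinite-dimensional case. I would use that $\dim \Ker(T^\ast - z)$ is constant on each half-plane. I would also use that $\Ran(T + K - z)$ is closed and of the same codimension as $\Ran(T - z)$ for $|\Imag z| > \norm{K}$, by a Neumann-series argument on $(T - z)^{-1}$ restricted to its range. Since Hilbert spaces of equal dimension are isomorphic, this gives the $\cong$ statement.

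For the finite-dimensional claims, take $d' = \dim\hilbert_1$; the index count $(kd', kd')$ and the parametrisation by $U(kd')$ are immediate. If $\dim\hilbert < \infty$, discreteness follows as in \cref{cor:main-result-finite-dim}. Build one extension as the direct sum of the first summand and extensions of the $J^{(m)}$, and add $K$. The first summand has compact resolvent because $\omega > 0$; if $\omega = 0$ this needs separate care, though in that case $H_{00}$ alone has spectrum of infinite multiplicity. Adding the bounded $K$ preserves compactness of the resolvent. Then \cite[Corollary 8.13]{schmudgen-unboundedselfadjointoperators-2012} transfers discreteness to all extensions.
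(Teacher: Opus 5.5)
Your proposal is correct and follows essentially the same route as the paper. You use normality to make $\Ker\Sigma$ reducing, apply \cref{prop:ess-sa} and \cref{thm:main-result} on $\hilbert_1$, transfer self-adjointness and the deficiency indices across a bounded self-adjoint perturbation, and obtain discreteness from one explicit extension together with \cite[Corollary 8.13]{schmudgen-unboundedselfadjointoperators-2012}. The only cosmetic difference is that the paper discards all of $\hmat\otimes\id$ at once rather than just its off-diagonal part $K$. Your caveat about $\omega=0$ is well taken: the paper's claim that $\id_{\hilbert_0}\otimes\omega a^\ast a$ has eigenvalues of finite multiplicity tacitly requires $\omega>0$ whenever $\Ker\Sigma\neq\{0\}$.
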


\begin{proof}
It will be sufficient to prove both claims for the operator
\begin{equation}
        \tilde{H} = \overline{\tilde{H}_{\mathrm{min}}},\qquad \tilde{H}_{\mathrm{min}} = \Sigma \otimes (a^\ast)^k + \Sigma^\ast \otimes a^k + \id_\hilbert \otimes \omega a^\ast a;
    \end{equation}
    indeed, $H = \tilde{H} + \hmat\otimes\id_{L^2(\rnum)}$ differs from $\tilde{H}$ by a bounded self-adjoint operator. Self-adjointness (for $k\leq2$) and the deficiency indices (for $k\geq3$) are stable under bounded self-adjoint perturbations~\cite{schmudgen-unboundedselfadjointoperators-2012}, and are therefore inherited from $\tilde{H}$; the discreteness of the spectrum will be transferred to $H$ separately, at the end of the proof.

    Clearly $\hilbert_0=\Ker\Sigma$ is an invariant subspace for $\Sigma$; since $\Sigma$ is normal, it is actually a reducing subspace for $\Sigma$, that is, $\Sigma=0_{\hilbert_0}\oplus\Sigma_1$ with respect to the natural decomposition $\hilbert=\hilbert_0\oplus\hilbert_1$. Consequently,  $\hilbert\otimes L^2(\rnum)=\left(\hilbert_0\otimes L^2(\rnum)\right)\oplus\left(\hilbert_1\otimes L^2(\rnum)\right)$ and, with respect to this decomposition,
    \begin{equation}
        \label{proofeq:H-decomposition}
        \tilde{H} = \tilde{H}_0 \oplus \tilde{H}_1 \, ,
    \end{equation}
    where
    \begin{align}
        \tilde{H}_0 & =  \id_{\hilbert_0}\otimes \omega a^\ast a, \\
        \tilde{H}_1 & =  \id_{\hilbert_1}\otimes \omega a^\ast a+\Sigma_1 \otimes (a^\ast)^k + \Sigma_1^\ast\otimes a^k   ,
    \end{align}
    with $\Sigma_1$ having a bounded inverse by assumption. $\tilde{H}_0$ is clearly self-adjoint. 
    
    If $k\leq2$, $\tilde{H}_1$ is also self-adjoint by \cref{prop:ess-sa}, and therefore $\tilde{H}$ is self-adjoint. If $k\geq3$, notice that $\Sigma_1$ is normal because $\Sigma$ is; therefore, we can apply \cref{thm:main-result}: $\tilde{H}_1$ is not self-adjoint and its deficiency subspaces, and thus the deficiency subspaces of $\tilde{H}$, are isomorphic to $\bigoplus_{m = 0}^{k-1}\hilbert_1$. As anticipated, the same holds for $H$.
    
         Finally, we address the spectrum of the self-adjoint extensions when $\dim\hilbert<\infty$. 
        As $\tilde{H}_0$ is self-adjoint, using \cref{proofeq:H-decomposition}, any self-adjoint extension $\tilde{H}_U$ of $\tilde{H}$ decomposes as
        \begin{equation}
            \tilde{H}_U = \tilde{H}_0 \oplus \tilde{H}_{1,U} \, , 
        \end{equation}
        where $\tilde{H}_{1,U}$ is a self-adjoint extension of $\tilde{H}_1$.
        By \cref{cor:main-result-finite-dim}, $\tilde{H}_{1,U}$ has purely discrete spectrum; since $\dim\hilbert_0<\infty$, the operator $\tilde{H}_0=\id_{\hilbert_0}\otimes\omega a^\ast a$ has eigenvalues $\omega n$ of finite multiplicity, hence purely discrete spectrum as well.
        Therefore, the spectrum of $\tilde{H}_U$ is purely discrete. Since $\hmat\otimes\id_{L^2(\rnum)}$ is bounded and self-adjoint, it leaves the operator domain unchanged and preserves compactness of the resolvent; thus $\tilde{H}_U + \hmat\otimes\id_{L^2(\rnum)}$ is a self-adjoint extension of $H$ with purely discrete spectrum. As $H$ has finite deficiency indices, \cite[Corollary 8.13]{schmudgen-unboundedselfadjointoperators-2012} implies that every self-adjoint extension of $H$ has purely discrete spectrum.
\end{proof}

We finally remark that, if $\dim\hilbert=d<\infty$, then the restriction of $\Sigma$ to $\hilbert_1=\hilbert_0^\perp$ always has a bounded inverse. Therefore, \cref{prop:main-result-inverse} applies to \textit{all} operators from \cref{def:op}: if $\Sigma$ is normal and nonzero, $H$ is self-adjoint if and only if $k\leq2$.

\begin{remark}[A direct argument in finite dimensions]
    \label{rem:finite-dim-argument}
    When the matter space $\hilbert$ is finite-dimensional, the deficiency indices computed in \cref{prop:main-result-inverse} admit a more elementary derivation, which we sketch here. As $\Sigma$ is normal, there is an orthonormal basis $\{e_j\}_{j=1}^{d}$ of $\hilbert$ jointly diagonalising $\Sigma$ and $\Sigma^\ast$, with $\Sigma e_j=\lambda_j e_j$. Leaving aside the matter term $\hmat\otimes\id_{L^2(\rnum)}$, the operator decomposes along this basis as
    \begin{equation}
    \label{eq:direct-sum-aint}
        \id_{\hilbert}\otimes\omega a^\ast a + \Sigma\otimes(a^\ast)^k + \Sigma^\ast\otimes a^k \;\simeq\; \bigoplus_{j=1}^{d}\left(\omega a^\ast a + \lambda_j (a^\ast)^k + \bar\lambda_j a^k\right),
    \end{equation}
    a direct sum of scalar operators on $L^2(\rnum)$ of the form studied in \cite{ashhab-finitedimensionalapproximationsgeneralized-2026,fischer-selfadjointrealizationshigherorder-2026,fischer-essentiallysingularlimits-2026}. Whenever $\lambda_j\neq0$, the deficiency indices are $(k,k)$ by \cite{fischer-selfadjointrealizationshigherorder-2026}; when $\lambda_j=0$, the summand is the free Hamiltonian $\omega a^\ast a$, which is essentially self-adjoint. As the number of nonzero eigenvalues, counted with multiplicity, equals $d'=\dim(\Ker\Sigma)^\perp$, the direct sum has deficiency indices $(kd',kd')$; adding back the bounded self-adjoint operator $\hmat\otimes\id_{L^2(\rnum)}$ leaves them unchanged.

   This argument extends, in principle, to an arbitrary bounded normal $\Sigma$ by replacing the eigendecomposition with the spectral theorem and the direct sum with a direct integral over the spectrum $\sigma(\Sigma)\subseteq\cnum$ of $\Sigma$; a careful analysis of the resulting direct integral would be needed to make this rigorous. Moreover, this argument depends intrinsically on the particular structure of the model and the resulting decomposition into squeezing operators. The block Jacobi decomposition, by contrast, is in principle applicable to broader polynomial interactions.
  
    Finally, we note that a similar argument can be used to obtain a lower bound on the deficiency indices even if $\aint$ is non-normal: if $\aint$ and $\aint^\ast$ share $d'$ linearly independent eigenvectors with non-zero eigenvalues, then the deficiency indices are at least $n_\pm \geq k d'$.
\end{remark}

\section{Examples}
\label{sec:examples}

We conclude by applying the results of \cref{sec:main-result} to some examples of physical interest: the $k$-photon quantum Rabi model (\cref{ex:krabi}) which served as the original motivation for this work, and its generalisation to multiple spins (\cref{ex:kdicke}). Both operators turn out to be self-adjoint for $k\leq2$ and not self-adjoint, with multiple self-adjoint extensions, for $k\geq3$. We finally discuss an example illustrating that the assumption of normality of the matrix $\Sigma$ mediating the light--matter interaction \textit{cannot} be waived in general: the $k$-photon Jaynes--Cummings model (\cref{ex:kjc}), which turns out to be self-adjoint for all $k$.

\begin{example}[$k$-photon quantum Rabi model]
\label{ex:krabi}
We consider again the $k$-photon Rabi model discussed in \cref{sec:intro}. On the Hilbert space $\cnum^2\otimes L^2(\rnum)$, we set
\begin{equation}  \label{eq:krabi}
    H =  \frac{\Omega}{2}\sigma_z \otimes  \id_{L^2(\rnum)} + \id_{\cnum^2} \otimes \omega a^\ast a +  g\sigma_x\otimes \left(a^k + (a^\ast)^k \right) ,
\end{equation}
where $\Omega$ is the energy splitting of the spin, and $g > 0$ is the strength of the light--matter interaction. The operator is initially defined on the minimal domain $\cnum^2\otimes\domain_0$ and then extended by closure. This is a specific instance of the class in \cref{def:op} obtained by setting
\begin{equation}
    \aint = g \sigma_x \, , \qquad \hmat = \frac{\Omega}{2} \sigma_z \, .
\end{equation}
Clearly $\Sigma$ is normal and invertible and the conditions of \cref{prop:ess-sa} and \cref{thm:main-result} are fulfilled: therefore, for $k\leq2$ it is self-adjoint, and for $k\geq3$ it has deficiency indices $(2k,2k)$ and thus admits multiple self-adjoint extensions parametrised by $2k\times2k$ unitary matrices. This result is compatible with the one found by Braak for $k=3$ via explicit computation in the Bargmann space \cite{braak-$k$photonquantumrabi-2025}.
Moreover, by \cref{cor:main-result-finite-dim}, the spectra of all its self-adjoint extensions are purely discrete.
\end{example}

\begin{example}[$k$-photon Dicke model]\label{ex:kdicke}
We consider the $N$-spin generalisation of the $k$-photon quantum Rabi model, which---in analogy with the usual nomenclature in the physics literature for the single-photon case---we will denote as the \textit{$k$-photon Dicke model}. Given $N\in\nnum$, on the Hilbert space $\cnum^{2^N}\otimes L^2(\rnum)$ we set
\begin{equation}
        H =    \frac{\Omega}{2}\Sigma_z\otimes  \id_{L^2(\rnum)} + \id_{\cnum^{2^N}}\otimes\omega a^\ast a + \Sigma_x\otimes g \left(a^k + (a^\ast)^k \right)   ,
    \end{equation}
    where $\Sigma_x,\Sigma_z$ are collective spin operators on $(\cnum^2)^{\otimes N} \cong \cnum^{2^N}$:
    \begin{equation}
        \Sigma_z  = \sum_{i = 1}^N \sigma_z^{(i)} \, , \quad \Sigma_x  = \sum_{i = 1}^N \sigma_x^{(i)} \,  ,
    \end{equation}
    where 
    \begin{equation}
       \sigma_z^{(i)}=\id_{\cnum^2}\otimes\cdots\otimes\overbrace{\sigma_z}^{i\text{th}}\otimes\cdots\otimes\id_{\cnum^2},\qquad \sigma_x^{(i)}=\id_{\cnum^2}\otimes\cdots\otimes\overbrace{\sigma_x}^{i\text{th}}\otimes\cdots\otimes\id_{\cnum^2}.
    \end{equation}
   Physically, this operator models a family of $N$ spins separately interacting with a single-mode boson field via $k$-photon interactions; the parameters $g,\Omega$ have analogous interpretations as in \cref{ex:krabi}. This corresponds to the model in \cref{def:op} with 
   \begin{equation}
       \Sigma=g\Sigma_x,\qquad \hmat=\frac{\Omega}{2}\Sigma_z.
   \end{equation}
    By direct scrutiny, one can see that $\Sigma_x$ is normal, but it is invertible if and only if $N$ is odd, thus making \cref{prop:ess-sa} and \cref{thm:main-result} not directly applicable. In fact, one can show the following:
    \begin{equation}
        \dim \Ker \Sigma_x = \begin{cases}
            0 & N \text{ odd} \\
             \dbinom{N}{N/2}
            & N \text{ even}
        \end{cases} \, .
    \end{equation}
Heuristically, if $N$ is even, there are $\dbinom{N}{N/2}$ possibilities to create a state with total spin in the $x$ direction equal to zero. By applying \cref{prop:main-result-inverse}, it follows that $H$ is self-adjoint for $k\leq2$, and for $k\geq3$ its deficiency indices are equal to
    \begin{equation}
        n_\pm = \begin{cases}
            2^N k & N \text{ odd}; \\
            \left(2^N -  
            \dbinom{N}{N/2}
            \right) k & N \text{ even}.
        \end{cases} \, 
    \end{equation}
    These indices can equivalently be read off by diagonalising $\Sigma_x$, cf. \cref{rem:finite-dim-argument}.
    Finally, we remark that the above results stay valid if one adds additional terms to the energy of the spin subsystems, e.g. incorporating spin--spin interactions.
  \end{example}

\begin{example}[$k$-photon Jaynes--Cummings model]\label{ex:kjc}
As anticipated, the assumption that $\Sigma$ be normal cannot be relaxed. To illustrate this, we consider the following model on $\cnum^2\otimes L^2(\rnum)$:
    \begin{equation}\label{eq:kjc}
        H =  \id_{\cnum^2} \otimes \omega a^\ast a +  \frac{\Omega}{2}\sigma_z \otimes \id_{L^2(\rnum)} + g\left(\sigma_- \otimes (a^\ast)^k + \sigma_+ \otimes a^k\right) , 
    \end{equation}
    where $\sigma_\pm$ are spin ladder operators:
    \begin{equation}
        \sigma_{\pm} = \frac{1}{2} \left( \sigma_x \pm \iu \sigma_y\right) \, .
    \end{equation}
For $k=1$, this reduces to the so-called Jaynes--Cummings model~\cite{larson-jaynescummingsmodel-2024}, which corresponds to the \textit{rotating-wave approximation} (RWA) of the quantum Rabi model~\cite{burgarth2024taming,richter-quantifyingrotatingwaveapproximation-2026}, obtained by neglecting the fast-oscillating terms $\sigma_+\otimes a^\ast$ and $\sigma_-\otimes a$ (compare \cref{eq:kjc} with \cref{eq:krabi}). We thus denote this as the \textit{$k$-photon Jaynes--Cummings model}~\cite{sukumar-multiphonongeneralisationjaynescummings-1981,mir-amplitudesquaredsqueezingmultiphoton-1993,el-orany-evolutionsuperpositionsqueezed-2003}.

Again, this model belongs to the class from \cref{def:op} with
\begin{equation}
    \Sigma=g\sigma_-,\qquad \hmat=\frac{\Omega}{2}\sigma_z;
\end{equation}
however, $\sigma_-$ is \textit{not} normal, as $\sigma_-^\ast = \sigma_+$ and the matrices $\sigma_-,\sigma_+$ do not commute. This makes all results of \cref{sec:main-result} inapplicable. In fact, we claim that $H$ is self-adjoint for all choices of $k$.

To this end, we follow~\cite{hillery-photonnumberdivergence-1990} and introduce the operator $\mathcal{N}_k = \id_{\cnum^2}\otimes a^\ast a + k \sigma_+ \sigma_-\otimes\id_{L^2(\rnum)}$, which represents a modified excitation number of the system. The Hilbert space of the system decomposes as follows:
    \begin{equation}
        \cnum^2 \otimes L^2(\rnum)\cong \bigoplus_{m = 0}^\infty \hilbert_m \, , 
    \end{equation}
    where $\hilbert_m$ is the $m$th eigenspace of $\mathcal{N}_k$:
    \begin{equation}
        \hilbert_m = \begin{cases}
            \mspan( e_0\otimes \phi_m) & 0 \leq m \leq k-1 \\
            \mspan( e_0\otimes \phi_m,  e_1 \otimes \phi_{m-k}) & m \geq k .
        \end{cases}
    \end{equation}
We claim that $\mathcal{N}_k $ is a conserved quantity for $H$, that is, each $\hilbert_m$ is an invariant subspace of $H$.
    We start with $m \leq k-1$, and write $\psi \in \hilbert_m$ as $e_0\otimes c_0 \phi_m$ for some $c_0 \in \cnum$.
    It immediately follows $H \psi = \big(\omega m-\frac{\Omega}{2}\big) \psi$, and hence
    \begin{equation}
        H \hilbert_m  \subset \hilbert_m \quad \forall m \leq k-1 \, . 
    \end{equation}
    Now we turn to $m \geq k$, and write $\psi \in \hilbert_m$ as $  e_0 \otimes c_0 \phi_m+  e_1\otimes c_1 \phi_{m-k}$.
    The diagonal part of $H$ (the terms proportional to $\omega$ and $\Omega/2$) clearly preserves $\hilbert_m$. As for the interaction, using the definition of $\beta_n$ from \cref{lem:decomposition}, we obtain
    \begin{equation}
        g\left(\sigma_- \otimes (a^\ast)^k + \sigma_+ \otimes a^k\right) \psi =  e_1 \otimes c_0 g\beta_{m-k} \phi_{m-k} +  e_0 \otimes c_1 g\beta_{m-k} \phi_m \in \hilbert_m \, , 
    \end{equation}
    and therefore $H$ leaves all subspaces $\hilbert_m$ invariant. This readily implies
    \begin{equation}
        H = \bigoplus_{m = 0}^\infty H_m,
    \end{equation}
    with $H_m$ being the restriction of $H$ to $\hilbert_m$, which is clearly bounded as each $\hilbert_m$ is finite-dimensional. As such, $H$ is the direct sum of self-adjoint operators and is thus self-adjoint~\cite[Theorem 2.23]{teschl-mathematicalmethodsquantum-2009}.
\end{example}

\subsection*{Acknowledgements} The authors gratefully acknowledge discussions with Daniel Braak.

\printbibliography

@article{burgarth2024taming,
  title = {Taming the {{Rotating Wave Approximation}}},
  author = {Burgarth, Daniel and Facchi, Paolo and Hillier, Robin and Ligabò, Marilena},
  date = {2024-02-21},
  journaltitle = {Quantum},
  volume = {8},
  pages = {1262},
  doi = {10.22331/q-2024-02-21-1262},
  url = {https://quantum-journal.org/papers/q-2024-02-21-1262/},
}

@book{akhiezer-classicalmomentproblem-2020,
  title = {The {{Classical Moment Problem}} and {{Some Related Questions}} in {{Analysis}}},
  author = {Akhiezer, N. I.},
  date = {2020-01},
  publisher = {{Society for Industrial and Applied Mathematics}},
  location = {Philadelphia, PA},
  doi = {10.1137/1.9781611976397},
  isbn = {978-1-61197-638-0 978-1-61197-639-7}
}

@article{ashhab-finitedimensionalapproximationsgeneralized-2026,
  title = {Finite-Dimensional Approximations of Generalized Squeezing},
  author = {Ashhab, Sahel and Fischer, Felix and Lonigro, Davide and Braak, Daniel and Burgarth, Daniel},
  date = {2026-01-02},
  journaltitle = {Physical Review A},
  shortjournal = {Phys. Rev. A},
  volume = {113},
  number = {1},
  pages = {013703},
  issn = {2469-9926, 2469-9934},
  doi = {10.1103/9vwp-f35c}
}

@article{bazavan-squeezingtrisqueezingquadsqueezing-2026,
  title = {Squeezing, Trisqueezing and Quadsqueezing in a Hybrid Oscillator–Spin System},
  author = {Băzăvan, O. and Saner, S. and Webb, D. J. and Ainley, E. M. and Drmota, P. and Nadlinger, D. P. and Araneda, G. and Lucas, D. M. and Ballance, C. J. and Srinivas, R.},
  sortname = {Bazavan, O. and Saner, S. and Webb, D. J. and Ainley, E. M. and Drmota, P. and Nadlinger, D. P. and Araneda, G. and Lucas, D. M. and Ballance, C. J. and Srinivas, R.},
  date = {2026-05-01},
  journaltitle = {Nature Physics},
  shortjournal = {Nat. Phys.},
  volume={22},
  pages = {757--762},
  issn = {1745-2473, 1745-2481},
  doi = {10.1038/s41567-026-03222-6}
}

@article{bencheikh-demonstratingquantumproperties-2022,
  title = {Demonstrating Quantum Properties of Triple Photons Generated by $\chi^3$-processes},
  author = {Bencheikh, Kamel and Cenni, Marina F. B. and Oudot, Enky and Boutou, Véronique and Félix, Corinne and Prades, Joel Compte and Vernay, Augustin and Bertrand, Julien and Bassignot, Florent and Chauvet, Mathieu and Bussières, Félix and Zbinden, Hugo and Levenson, Ariel and Boulanger, Benoît},
  date = {2022-10-10},
  journaltitle = {The European Physical Journal D},
  shortjournal = {Eur. Phys. J. D},
  volume = {76},
  number = {10},
  pages = {186},
  issn = {1434-6079},
  doi = {10.1140/epjd/s10053-022-00514-3}
}

@article{boutin-effecthigherordernonlinearities-2017,
  title = {Effect of {{Higher-Order Nonlinearities}} on {{Amplification}} and {{Squeezing}} in {{Josephson Parametric Amplifiers}}},
  author = {Boutin, Samuel and Toyli, David M. and Venkatramani, Aditya V. and Eddins, Andrew W. and Siddiqi, Irfan and Blais, Alexandre},
  date = {2017-11-15},
  journaltitle = {Physical Review Applied},
  shortjournal = {Phys. Rev. Applied},
  volume = {8},
  number = {5},
  pages = {054030},
  issn = {2331-7019},
  doi = {10.1103/PhysRevApplied.8.054030}
}

@InBook{braak-$k$photonquantumrabi-2025,
author="Braak, Daniel",
editor="Takagi, Tsuyoshi
and Wakayama, Masato
and Kunihiro, Noboru
and Tanaka, Keisuke
and Kimoto, Kazufumi
and Kudo, Momonari",
title="The $k$-Photon Quantum Rabi Model",
bookTitle="Mathematical Foundations for Post-Quantum Cryptography: Crypto-Math CREST",
year="2026",
publisher="Springer Nature Singapore",
address="Singapore",
pages="75--87",
isbn="978-981-96-1218-5",
doi="10.1007/978-981-96-1218-5_5",
url="https://doi.org/10.1007/978-981-96-1218-5_5"
}

@article{braak-integrabilityrabimodel-2011,
  title = {Integrability of the {{Rabi Model}}},
  author = {Braak, D.},
  date = {2011-08-29},
  journaltitle = {Physical Review Letters},
  shortjournal = {Phys. Rev. Lett.},
  volume = {107},
  number = {10},
  pages = {100401},
  issn = {0031-9007, 1079-7114},
  doi = {10.1103/PhysRevLett.107.100401}
}

@article{braeutigam-deficiencyindicesoperators-2019,
  title = {Deficiency Indices of the Operators Generated by Infinite {{Jacobi}} Matrices with Operator Entries},
  author = {Braeutigam, I. N. and Mirzoev, K. A.},
  date = {2019-06-04},
  journaltitle = {St. Petersburg Mathematical Journal},
  shortjournal = {St. Petersburg Math. J.},
  volume = {30},
  number = {4},
  pages = {621--638},
  issn = {1061-0022, 1547-7371},
  doi = {10.1090/spmj/1562}
}

@article{budyka-deficiencyindicesblock-2024,
  title = {Deficiency {{Indices}} of {{Block Jacobi Matrices}}: {{Survey}}},
  shorttitle = {Deficiency {{Indices}} of {{Block Jacobi Matrices}}},
  author = {Budyka, V. S. and Malamud, M. M. and Mirzoev, K. A.},
  date = {2024-01},
  journaltitle = {Journal of Mathematical Sciences},
  shortjournal = {J Math Sci},
  volume = {278},
  number = {1},
  pages = {39--54},
  issn = {1072-3374, 1573-8795},
  doi = {10.1007/s10958-024-06904-9}
}

@article{budyka-deficiencyindicesdiscreteness-2022a,
  title = {Deficiency Indices and Discreteness Property of Block {{Jacobi}} Matrices and {{Dirac}} Operators with Point Interactions},
  author = {Budyka, Viktoriya S. and Malamud, Mark M.},
  date = {2022-02-01},
  journaltitle = {Journal of Mathematical Analysis and Applications},
  shortjournal = {Journal of Mathematical Analysis and Applications},
  volume = {506},
  number = {1},
  pages = {125582},
  issn = {0022-247X},
  doi = {10.1016/j.jmaa.2021.125582}
}

@article{budyka-selfadjointnessdiscretenessspectrum-2020,
  title = {Self-{{Adjointness}} and {{Discreteness}} of the {{Spectrum}}     of {{Block Jacobi Matrices}}},
  author = {Budyka, V. S. and Malamud, M. M.},
  date = {2020-10-01},
  journaltitle = {Mathematical Notes},
  shortjournal = {Math Notes},
  volume = {108},
  number = {3},
  pages = {445--450},
  issn = {1573-8876},
  doi = {10.1134/S000143462009014X}
}

@article{chang-observationthreephotonspontaneous-2020,
  title = {Observation of {{Three-Photon Spontaneous Parametric Down-Conversion}} in a {{Superconducting Parametric Cavity}}},
  author = {Chang, C. W. Sandbo and Sabín, Carlos and Forn-Díaz, P. and Quijandría, Fernando and Vadiraj, A. M. and Nsanzineza, I. and Johansson, G. and Wilson, C. M.},
  date = {2020-01-16},
  journaltitle = {Physical Review X},
  shortjournal = {Phys. Rev. X},
  volume = {10},
  number = {1},
  pages={011011},
  publisher = {American Physical Society (APS)},
  issn = {2160-3308},
  doi = {10.1103/physrevx.10.011011}
}

@article{corona-thirdorderspontaneousparametric-2011,
  title = {Third-Order Spontaneous Parametric down-Conversion in Thin Optical Fibers as a Photon-Triplet Source},
  author = {Corona, María and Garay-Palmett, Karina and U’Ren, Alfred B.},
  date = {2011-09-15},
  journaltitle = {Physical Review A},
  shortjournal = {Phys. Rev. A},
  volume = {84},
  number = {3},
  pages={033823},
  publisher = {American Physical Society (APS)},
  issn = {1050-2947, 1094-1622},
  doi = {10.1103/physreva.84.033823}
}

@article{eriksson-universalcontrolbosonic-2024,
  title = {Universal Control of a Bosonic Mode via Drive-Activated Native Cubic Interactions},
  author = {Eriksson, Axel M. and Sépulcre, Théo and Kervinen, Mikael and Hillmann, Timo and Kudra, Marina and Dupouy, Simon and Lu, Yong and Khanahmadi, Maryam and Yang, Jiaying and Castillo-Moreno, Claudia and Delsing, Per and Gasparinetti, Simone},
  date = {2024-03-21},
  journaltitle = {Nature Communications},
  shortjournal = {Nat Commun},
  volume = {15},
  number = {1},
  pages = {2512},
  issn = {2041-1723},
  doi = {10.1038/s41467-024-46507-1}
}

@online{fischer-essentiallysingularlimits-2026,
  title = {Essentially Singular Limits of {{Jacobi}} Operators and Applications to Higher-Order Squeezing},
  author = {Fischer, Felix and Burgarth, Daniel and Lonigro, Davide},
  date = {2026-05-20},
  eprint = {2605.21355},
  eprinttype = {arXiv},
  eprintclass = {math-ph},
  doi = {10.48550/arXiv.2605.21355},
  pubstate = {prepublished}
}

@article{fischer-selfadjointrealizationshigherorder-2026,
  title = {Self-Adjoint Realizations of Higher-Order Squeezing Operators},
  author = {Fischer, Felix and Burgarth, Daniel and Lonigro, Davide},
  date = {2026},
  volume={59},
  pages={255203},
  journaltitle = {Journal of Physics A: Mathematical and Theoretical},
  shortjournal = {J. Phys. A: Math. Theor.},
  issn = {1751-8121},
  doi = {10.1088/1751-8121/ae7acb}
}

@online{gregory-foursixphotonstimulated-2026,
  title = {Four- and Six-Photon Stimulated {{Raman}} Transitions for Coherent Qubit and Qudit Operations},
  author = {Gregory, Gabriel J. and Ritchie, Evan R. and Quinn, Alex and Brudney, Sean and Wineland, David J. and Allcock, David T. C. and O'Reilly, Jameson},
  date = {2026-02-20},
  eprint = {2602.18567},
  eprinttype = {arXiv},
  eprintclass = {quant-ph},
  doi = {10.48550/arXiv.2602.18567},
  pubstate = {prepublished}
}

@article{menard-emissionphotonmultiplets-2022,
  title = {Emission of {{Photon Multiplets}} by a dc-{{Biased Superconducting Circuit}}},
  author = {Ménard, G. C. and Peugeot, A. and Padurariu, C. and Rolland, C. and Kubala, B. and Mukharsky, Y. and Iftikhar, Z. and Altimiras, C. and Roche, P. and Le Sueur, H. and Joyez, P. and Vion, D. and Esteve, D. and Ankerhold, J. and Portier, F.},
  sortname = {Menard, G. C. and Peugeot, A. and Padurariu, C. and Rolland, C. and Kubala, B. and Mukharsky, Y. and Iftikhar, Z. and Altimiras, C. and Roche, P. and Le Sueur, H. and Joyez, P. and Vion, D. and Esteve, D. and Ankerhold, J. and Portier, F.},
  date = {2022-04-08},
  journaltitle = {Physical Review X},
  shortjournal = {Phys. Rev. X},
  volume = {12},
  number = {2},
  pages={021006},
  publisher = {American Physical Society (APS)},
  issn = {2160-3308},
  doi = {10.1103/physrevx.12.021006}
}

@article{moszynski-barriernonsubordinacyabsolutely-2025,
  title = {Barrier {{Nonsubordinacy}} and {{Absolutely Continuous Spectrum}} of {{Block Jacobi Matrices}}},
  author = {Moszyński, Marcin and Świderski, Grzegorz},
sortname = {Moszynski, Marcin and Swiderski, Grzegorz},
  date = {2025-11-17},
  journaltitle = {Constructive Approximation},
  shortjournal = {Constr Approx},
  volume={63},
  pages={475–504},
  issn = {1432-0940},
  doi = {10.1007/s00365-025-09724-5}
}

@book{rudin-functionalanalysis-2007,
  title = {Functional Analysis},
  author = {Rudin, Walter},
  date = {2007},
  series = {International Series in Pure and Applied Mathematics},
  edition = {2. ed.},
  publisher = {McGraw-Hill},
  location = {Boston, Mass.},
  isbn = {978-0-07-054236-5},
  pagetotal = {424}
}

@book{schmudgen-momentproblem-2017,
  title = {The {{Moment Problem}}},
  author = {Schmüdgen, Konrad},
  date = {2017},
  series = {Graduate {{Texts}} in {{Mathematics}}},
  volume = {277},
  publisher = {Springer International Publishing},
  location = {Cham},
  doi = {10.1007/978-3-319-64546-9},
  isbn = {978-3-319-64545-2 978-3-319-64546-9}
}

@book{schmudgen-unboundedselfadjointoperators-2012,
  title = {Unbounded {{Self-adjoint Operators}} on {{Hilbert Space}}},
  author = {Schmüdgen, Konrad},
  date = {2012},
  series = {Graduate {{Texts}} in {{Mathematics}}},
  volume = {265},
  publisher = {Springer Netherlands},
  location = {Dordrecht},
  doi = {10.1007/978-94-007-4753-1},
  isbn = {978-94-007-4752-4 978-94-007-4753-1}
}

@article{schulz-baldes-rotationnumbersjacobi-2007,
  title = {Rotation Numbers for {{Jacobi}} Matrices with Matrix Entries},
  author = {Schulz-Baldes, Hermann},
  journal={Mathematical Physics Electronic Journal},
  volume={13},
  pages={5},
  year={2007},
  url={https://eudml.org/doc/129089}
}

@article{swiderski-asymptoticzerosdistribution-2025,
  title = {Asymptotic Zeros' Distribution of Orthogonal Polynomials with Unbounded Recurrence Coefficients},
  author = {Świderski, Grzegorz and Trojan, Bartosz},
  sortname = {Swiderski, Grzegorz and Trojan, Bartosz},
  date = {2025-12-15},
  journaltitle = {Journal of Functional Analysis},
  shortjournal = {Journal of Functional Analysis},
  volume = {289},
  number = {12},
  pages = {111162},
  issn = {0022-1236},
  doi = {10.1016/j.jfa.2025.111162}
}

@article{swiderski-spectralpropertiesblock-2018,
  title = {Spectral {{Properties}} of {{Block Jacobi Matrices}}},
  author = {Świderski, Grzegorz},
  sortname = {Swiderski, Grzegorz},
  date = {2018-10},
  journaltitle = {Constructive Approximation},
  shortjournal = {Constr Approx},
  volume = {48},
  number = {2},
  pages = {301--335},
  issn = {0176-4276, 1432-0940},
  doi = {10.1007/s00365-018-9420-z}
}

@book{teschl-jacobioperatorscompletely-1999,
  title = {Jacobi {{Operators}} and {{Completely Integrable Nonlinear Lattices}}},
  author = {Teschl, Gerald},
  date = {1999-10-05},
  series = {Mathematical {{Surveys}} and {{Monographs}}},
  volume = {72},
  publisher = {American Mathematical Society},
  location = {Providence, Rhode Island},
  doi = {10.1090/surv/072},
  isbn = {978-0-8218-1940-1 978-1-4704-1299-9}
}

@book{teschl-mathematicalmethodsquantum-2009,
  title = {Mathematical {{Methods}} in {{Quantum Mechanics}}},
  author = {Teschl, Gerald},
  date = {2009},
  series = {Graduate {{Studies}} in {{Mathematics}}},
  volume = {99},
  publisher = {American Mathematical Society},
  location = {Providence, Rhode Island},
  doi = {10.1090/gsm/157}
}

@Article{ashhab-fractionalsqueezingspectra-2026,
doi = {10.1088/1751-8121/ae7681},
url = {https://doi.org/10.1088/1751-8121/ae7681},
year = {2026},
publisher = {IOP Publishing},
volume = {59},
number = {25},
pages = {255301},
author = {Ashhab, Sahel},
title = {Fractional squeezing: spectra and dynamics from generalized squeezing Hamiltonian with fractional orders},
journal = {Journal of Physics A: Mathematical and Theoretical}
}

@book{berezanskii-expansioneigenfunctionsselfadjoint-1968,
  title = {Expansion in {{Eigenfunctions}} of {{Self-adjoint Operators}}},
  author = {Berezanskii, Ju M},
  date = {1968},
  series = {Translations of {{Mathematical}}                         {{Monographs}}},
  volume = {17},
  publisher = {American Mathematical                     Society}
}

@article{xie-quantumrabimodel-2017,
  title = {The Quantum {{Rabi}} Model: Solution and Dynamics},
  shorttitle = {The Quantum {{Rabi}} Model},
  author = {Xie, Qiongtao and Zhong, Honghua and Batchelor, Murray T and Lee, Chaohong},
  date = {2017-03-17},
  journaltitle = {Journal of Physics A: Mathematical and Theoretical},
  shortjournal = {J. Phys. A: Math. Theor.},
  volume = {50},
  number = {11},
  pages = {113001},
  issn = {1751-8113, 1751-8121},
  doi = {10.1088/1751-8121/aa5a65}
}

@article{richter-quantifyingrotatingwaveapproximation-2026,
  title = {Quantifying the Rotating-Wave Approximation of the {{Dicke}} Model},
  author = {Richter, Leonhard and Burgarth, Daniel and Lonigro, Davide},
  date = {2026-02},
  journaltitle = {Journal of Physics A: Mathematical and Theoretical},
  shortjournal = {J. Phys. A: Math. Theor.},
  volume = {59},
  number = {7},
  pages = {075203},
  publisher = {IOP Publishing},
  issn = {1751-8121},
  doi = {10.1088/1751-8121/ae42a3}
}

@book{larson-jaynescummingsmodel-2024,
  title = {The Jaynes–Cummings Model and Its Descendants (Second Edition)},
  author = {Larson, Jonas and Mavrogordatos, Themistoklis},
  date = {2024},
  publisher = {IOP Publishing},
  doi = {10.1088/978-0-7503-6452-2},
  isbn = {978-0-7503-6452-2}
}

@article{hillery-photonnumberdivergence-1990,
  title = {Photon Number Divergence in the Quantum Theory of {\mkbibemph{n}}-Photon down Conversion},
  author = {Hillery, Mark},
  date = {1990-07-01},
  journaltitle = {Physical Review A},
  shortjournal = {Phys. Rev. A},
  volume = {42},
  number = {1},
  pages = {498--502},
  issn = {1050-2947, 1094-1622},
  doi = {10.1103/PhysRevA.42.498}
}

@book{olver-asymptoticsspecialfunctions-2010,
  title = {Asymptotics and Special Functions},
  author = {Olver, Frank W. J.},
  date = {2010},
  series = {{{AKP}} Classics},
  edition = {Reprinted},
  publisher = {CRC Press},
  location = {Boca Raton, Fla.},
  isbn = {978-1-56881-069-0},
  pagetotal = {572}
}

@article{el-orany-evolutionsuperpositionsqueezed-2003,
  title = {On the Evolution of Superposition of Squeezed Displaced Number States with the Multiphoton {{Jaynes}}–{{Cummings}} Model},
  author = {El-Orany, Faisal A. A. and Obada, A.-S.},
  date = {2003-01},
  journaltitle = {Journal of Optics B: Quantum and Semiclassical Optics},
  shortjournal = {J. Opt. B: Quantum Semiclass. Opt.},
  volume = {5},
  number = {1},
  pages = {60},
  issn = {1464-4266},
  doi = {10.1088/1464-4266/5/1/309}
}

@article{sukumar-multiphonongeneralisationjaynescummings-1981,
  title = {Multi-Phonon Generalisation of the {{Jaynes-Cummings}} Model},
  author = {Sukumar, C. V. and Buck, B.},
  date = {1981-06-01},
  journaltitle = {Physics Letters A},
  shortjournal = {Physics Letters A},
  volume = {83},
  number = {5},
  pages = {211--213},
  issn = {0375-9601},
  doi = {10.1016/0375-9601(81)90825-2}
}

@article{mir-amplitudesquaredsqueezingmultiphoton-1993,
  title = {Amplitude-Squared Squeezing in the Multiphoton {{Jaynes-Cummings}} Model: {{Effects}} of the Atomic Coherent States and Detuning},
  shorttitle = {Amplitude-Squared Squeezing in the Multiphoton {{Jaynes-Cummings}} Model},
  author = {Mir, Mubeen A.},
  date = {1993-05-01},
  journaltitle = {Physical Review A},
  shortjournal = {Phys. Rev. A},
  volume = {47},
  number = {5},
  pages = {4384--4391},
  issn = {1050-2947, 1094-1622},
  doi = {10.1103/PhysRevA.47.4384}
}

@article{zhang-2modekphotonquantum-2017,
  title = {On the 2-Mode and k-Photon Quantum {{Rabi}} Models},
  author = {Zhang, Yao-Zhong},
  date = {2017-05},
  journaltitle = {Reviews in Mathematical Physics},
  shortjournal = {Rev. Math. Phys.},
  volume = {29},
  number = {04},
  pages = {1750013},
  publisher = {World Scientific Pub Co Pte Lt},
  issn = {0129-055X, 1793-6659},
  doi = {10.1142/s0129055x17500131}
}

@article{Deng26,
author = {Xiaowei Deng and Yanyan Cai and Zhongchu Ni and Libo Zhang and Song Liu and Yuan Xu and Dapeng Yu},
journal = {Optica},
number = {7},
pages = {1334--1339},
publisher = {Optica Publishing Group},
title = {Generating and characterizing generalized squeezed states in a superconducting microwave cavity},
volume = {13},
year = {2026},
url = {https://opg.optica.org/optica/abstract.cfm?URI=optica-13-7-1334},
doi = {10.1364/OPTICA.596662},
}

\end{document}